\newtheorem{theorem}{Theorem}
\newtheorem{lemma}{Lemma}
\newtheorem*{Assumption}{Assumption A}
\title{Efficient inference about the tail weight in multivariate Student~$t$ distributions 
}
\author{Christophe Ley \, and \, Anouk Neven \\ 
 {\it \small ECARES and D\' epartement de Math\' ematique, Universit\'{e} libre de Bruxelles} \\ 
{\it \small UR en math\'ematiques, Universit\'e du Luxembourg}\\
} 
\date{}
\begin{document}
\maketitle

\begin{abstract}

 We propose a new testing procedure about the tail weight parameter of multivariate Student $t$ distributions by having recourse to the Le Cam methodology. Our test is asymptotically as efficient as the classical likelihood ratio test, but outperforms the latter by its flexibility and simplicity: indeed, our approach allows to estimate the location and scatter nuisance parameters by any root-$n$ consistent estimators, hereby avoiding numerically complex maximum likelihood estimation. The finite-sample properties of our test  are analyzed in a  Monte Carlo simulation study, and we  apply our method on a financial data set. We conclude the paper by indicating how to use this framework  for efficient point estimation.

\noindent \emph{Keywords and Phrases}:  efficient testing procedures; likelihood ratio test; local asymptotic normality; Student $t$ distribution; tail weight
\end{abstract}
\section{Introduction}\label{intro}

Under its most common form, the $k$-dimensional $t$ distribution admits the density
\begin{align} \label{student} f_{\pmb{\mu}, \pmb{\Sigma}, \nu}(\pmb{x}):=c_{\nu,k}|\pmb{\Sigma}|^{-1/2}\left(1+\|\pmb{\Sigma}^{-1/2}(\pmb{x}-\pmb{\mu})\|^2/\nu \right)^{-\frac{\nu+k}{2}}, \quad \pmb{x} \in \mathbb{R}^k,\end{align}
with location parameter $\pmb{\mu} \in \mathbb{R}^k$, scatter parameter $\pmb{\Sigma} \in S_k$, the class of symmetric and positive definite $k \times k$ matrices, and tail weight parameter $\nu \in \mathbb{R}_0^+$, and with normalizing constant 
$$c_{\nu,k}:=\frac{\Gamma \left(\frac{\nu+k}{2}\right)}{(\pi \nu)^{k/2} \,\Gamma \left(\frac{\nu}{2}\right)},$$
where the Gamma function is given by $\Gamma(z)=\int_{0}^\infty \exp(-t)t^{z-1}\, \mathrm{d}t$. Denoting by $F_{\pmb{\mu}, \pmb{\Sigma}, \nu}$ the cumulative distribution function, we deduce from (\ref{student}) that 
$$ F_{\pmb{\mu}, \pmb{\Sigma}, \nu}(\pmb{x})=O\left(\|\pmb{x}\|^{-\nu}\right)$$
as $\|\pmb{x}\| \to \infty$, which makes the Student $t$ distribution  a member of the class of heavy-tailed distributions with tail index $\gamma=\frac{1}{\nu}$. 
It follows that the kurtosis is regulated by the parameter $\nu$: the smaller $\nu$, the heavier the tails. For instance, for $\nu=1$, we retrieve the fat-tailed multivariate Cauchy distribution, whereas, when $\nu$ tends to infinity, we obtain the multivariate Gaussian distribution.

This heavy tail property is the key to the  burgeoning popularity of the $t$ distribution in empirical financial data modeling. Mandelbrot (1963) is the first to show that asset returns do not follow a Gaussian distribution, but have heavier tails. Ever since, there has been a great deal of empirical evidence supporting the existence of heavy-tailed models in finance and thereby challenging the well-established classical Gaussian assumption (see, \mbox{e.g.}, Fama, 1965 or Richardson and Smith, 1993). Among these models,  the class of $t$ distributions has been suggested as a tractable, more viable alternative, particularly because it captures the observed fat tails (see, \mbox{e.g.}, Praetz,~1972, Blattberg and Gonedes, 1974, Hagerman,~1978, Perry,~1983, Boothe and Glassman,~1987 or Kan and Zhou, 2003). For example, Kan and Zhou (2003) report that  the multivariate normality assumption on the distribution of Fama and French (1993)'s asset returns is rejected by a kurtosis test with a $p$-value of less than $0.01\%$. 

By advocating the use of the $t$ distribution instead of the Gaussian, a natural problem of interest consists in asking ``which Student distribution should be used?'' or, in statistical terms, in testing the null hypothesis $\mathcal{H}_0: \nu=\nu_0$ (where $\nu_0>0$ is fixed)  against alternatives of the form $\mathcal{H}_1^{\neq}: \nu \neq\nu_0$ (two-sided test),  $\mathcal{H}_1^{<}: \nu<\nu_0$ or $\mathcal{H}_1^{>}: \nu>\nu_0$ (one-sided tests). The likelihood ratio test (LR hereafter) provides the standard way to tackle this question. Clearly, the underlying test statistic invokes (i) the maximum likelihood estimators of  $(\pmb{\mu}, \pmb{\Sigma}, \nu)$ as solutions of the maximization of the log-likelihood function based on the $t$ density in (\ref{student}) without any constraint, and (ii) the maximum likelihood estimators of $(\pmb{\mu}, \pmb{\Sigma})$ subject to the restriction $\nu=\nu_0$. As is well-known, there exist no closed-form solutions to these maximization problems in the Student $t$ family and hence numerical procedures are required. A standard approach for solving numerically the likelihood equations is the popular EM algorithm of Dempster \emph{et al.}~(1977) or some variants of it discussed extensively for the Student~$t$ case in Liu and Rubin (1995) (see also the references therein). However, when $\nu$ is small or unknown, Liu and Rubin~(1995) have expressed the warning that maximum likelihood (ML) procedures might be misled due to numerous spikes with very high likelihood mass in the likelihood function, hereby rendering the associated parameter estimates ``of little practical interest''. See also the recent paper Gonz\'alez-Ar\'evalo and Pal (2013) where this problematic is treated.

Common large sample alternatives to the likelihood ratio test are the \textit{Wald test} (W) and the \emph{Lagrange multiplier test} (LM) (or \emph{Rao score test}), which is particularly popular in econometrics. Similar to the likelihood ratio test statistic, the underlying test statistics are based on maximum likelihood estimators: the Wald test requires the computation of the maximum likelihood estimator of the triple $(\pmb{\mu}, \pmb{\Sigma}, \nu)$, whereas the Lagrange multiplier test is derived from a constrained maximization problem, namely the maximization of the log-likelihood with respect to location and scatter subject to the constraint $\nu=\nu_0$. The LR, W and LM tests, regarded as the Holy Trinity in asymptotic statistics, are known to share the same efficiency properties (see Engle,~1984), but all three suffer from the non-existence of exact expressions and hence from the above-mentioned problems associated with ML-based methods  in the (univariate and) multivariate~$t$ case. 

In the present paper, we therefore propose a new technique for tackling hypothesis testing on the tail weight parameter $\nu$ under unspecified location and scatter. More precisely, we shall propose a class of testing procedures which are all asymptotically as powerful as the LR, W and LM tests, but improve on the latter by their flexibility and simplicity. As we shall see, our approach (described in Section~\ref{Intro2} below) will allow us to write out explicitly the powers of our tests against sequences of contiguous alternatives, which in general is extremely difficult to achieve with the classical tests. Moreover, the framework we develop here for hypothesis testing can be extended to other inferential issues such as parameter estimation.

\subsection{Optimal parametric tests via the Le Cam methodology}\label{Intro2}
As already mentioned above, the main purpose of the present work is to derive simple yet efficient tests for the tail weight parameter of multivariate Student $t$ distributions, more precisely, tests that are locally and asymptotically optimal. The underpinning optimality in this paper is the so-called \textit{maximin} optimality. Recall that a test $\phi^*$ is called maximin in the class $\mathcal{C}_\alpha$ of level-$\alpha$ tests for $\mathcal{H}_0$ against $\mathcal{H}_1$ if (i) $\phi^*$ has level $\alpha$ and (ii) the power of $\phi^*$ is such that 
$$\inf_{{\rm P} \in \mathcal{H}_1} \textnormal{E}_{{\rm P}}[\phi^*] \geq \sup_{\phi \in \mathcal{C}_\alpha} \inf_{{\rm P} \in \mathcal{H}_1} \textnormal{E}_{{\rm P}}[\phi].$$

The backbone of our construction will be the   Le Cam methodology. The concept of \textit{local asymptotic normality} (LAN) is among Le Cam's best-known contributions and plays an essential role in asymptotic optimality theory. To the best of our knowledge, nobody has yet taken advantage of the LAN approach in the framework of tail parameter inference for univariate and multivariate $t$ distributions. In order to ease the reading, we will briefly review here the LAN property and its contribution to the theory of hypothesis testing. The following definition of LAN corresponds to Le Cam and Yang (2000).

For all $n$, let $\mathcal{E}^{(n)}=\left( \mathcal{X}^{(n)},\mathcal{A}^{(n)}, \mathcal{P}^{(n)}:=\{{\rm P}^{(n)}_{\pmb{\theta}} | \pmb{\theta} \in \pmb{\Theta} \subset \mathbb{R}^k\} \right)$ be a sequence of $\pmb{\theta}$-parametric statistical models, called \textit{experiments} in Le Cam's terminology, and let $\delta_n$ be a sequence of positive numbers going to zero. The family $\mathcal{P}^{(n)}$ is called LAN at $\pmb{\theta} \in \pmb{\Theta}$ if there exists a sequence of random vectors $\pmb{\Delta}^{(n)}(\pmb{\theta})$, called \textit{central sequence}, and a non-singular symmetric matrix $\pmb{J}(\pmb{\theta})$, the associated \textit{Fisher information matrix},  such that, for every bounded sequence of vectors $\pmb{h}_n \in \mathbb{R}^k$,
\begin{align}\label{lan}\log \frac{d{\rm P}^{(n)}_{\pmb{\theta}+\delta_n\pmb{h}_n}}{d{\rm P}^{(n)}_{\pmb{\theta}}} -\pmb{h}_n' \pmb{\Delta}^{(n)}(\pmb{\theta})+\frac{1}{2}\pmb{h}_n'\pmb{J}(\pmb{\theta})\pmb{h}_n =o_{{\rm P}}(1) \end{align}
and  $\pmb{\Delta}^{(n)}(\pmb{\theta}) \stackrel{\mathcal{L}} \to \mathcal{N}_k({\pmb 0}, \pmb{J}(\pmb{\theta}))$, both under ${\rm P}^{(n)}_{\pmb{\theta}}$ as $n \to \infty$.

We easily see that the log-likelihood ratios $\log \frac{d{\rm P}^{(n)}_{\pmb{\theta}+\delta_n\pmb{h}_n}}{d{\rm P}^{(n)}_{\pmb{\theta}}}$ in (\ref{lan}) of a LAN family behave asymptotically like the log-likelihood ratio of the classical Gaussian shift experiment
$$\mathcal{E}_{\pmb{J}(\pmb{\theta})}=\left(\mathbb{R}^k, \mathcal{B}_k, \mathcal{P}_{\pmb{\theta}}:=\left\{ {\rm P}_{\pmb{h}, \pmb{\theta}}=\mathcal{N}_k\left( \pmb{J}(\pmb{\theta}) \pmb{h}, \pmb{J}(\pmb{\theta}\right))| \,\pmb{h} \in \mathbb{R}^k\right\}\right)$$
with a single observation which we denote as $\pmb{\Delta}$. This approximation of the statistical experiments $\mathcal{E}^{(n)}$ by the normal experiment $\mathcal{E}_{\pmb{J}(\pmb{\theta})}$ was  Le Cam's main motivation: \textit{the family of probability measures under study can be approximated very closely by a family of simpler nature} (Le Cam, 1960).

In the context of hypothesis testing, this approximation means that, asymptotically, all power functions that are implementable in the local experiments $\mathcal{E}^{(n)}$ are the power functions that are possible in the Gaussian shift experiment $\mathcal{E}_{\pmb{J}(\pmb{\theta})}$. In view of these considerations, it follows that asymptotically optimal tests in the local models can be derived by analyzing the Gaussian limit model. More precisely, if a test~$\phi(\pmb{\Delta})$ enjoys some exact optimality property in the Gaussian experiment  $\mathcal{E}_{\pmb{J}(\pmb{\theta})}$, then the corresponding sequence $\phi(\pmb{\Delta}^{(n)})$ inherits, locally and asymptotically, the same optimality properties in the sequence of experiments $\mathcal{E}^{(n)}$.

In the present work, we will investigate the LAN phenomenon  for a sequence of (univariate and multivariate) Student $t$ distributions with respect to the location, scatter and tail parameters (more precisely, we shall obtain the \emph{uniform} LAN property; see Section 2.2). As explained above, the (U)LAN property allows one to transfer the well-known optimal procedures from the classical Gaussian shift experiment to the $t$ model and thus paves the way towards the construction of locally and asymptotically optimal (in the maximin sense) tests for the tail parameter under unspecified location and scatter. As we shall see in the sequel, our test statistic strongly resembles that of the LM test, with the essential difference that our approach allows us to use \emph{any} root-$n$ consistent estimators of location and scatter under fixed $\nu=\nu_0$ (e.g., for $\nu_0>4$, we can even use the mean vector and variance-covariance matrix), hence circumvents their ML estimation. 
Furthermore, as already mentioned, the methodology we develop in this paper lends itself also quite well for parameter estimation, by having recourse to the celebrated one-step estimation method of Le Cam which renders \emph{any} root-$n$ consistent estimator as efficient as the ML estimator by adding to that estimator a certain form of the central sequence.

\subsection{Outline of the paper}
The paper is organized as follows. In Section 2, we set the notation and establish the   uniform local asymptotic normality property (with respect to the location,  scatter and tail parameters). In Section 3 we then construct the new optimal tests for tail weight. We  first explain in Section 3.1 how to estimate the nuisance parameters and identify then in Section 3.2  the efficient central sequence for tail weight under unknown location and scatter. In Section 3.3 we derive, thanks to the ULAN property, the locally and asymptotically optimal tests. We study the asymptotic behavior of our test statistic both under the null and under a sequence of local alternatives, allowing us to compute explicitly  the power of our test.   We explore in Section 4 the finite-sample properties  of our new tests via a  Monte Carlo simulation study. In Section 5, we apply our method on a financial data set. Section 6 provides an outlook on how to apply our approach in parameter estimation.  Finally, the Appendix contains the technical proofs.

\section{Uniform local asymptotic normality (ULAN)}

Throughout, the data points $\pmb{X}_1, \ldots,\pmb{ X}_n$ are assumed to follow a multivariate $t$ distribution with parameters $(\pmb{\mu},\pmb{\Sigma}, \nu)=:\pmb{\mathcal{V}}$. The relevant statistical experiment thus involves the parametric family
$$\mathcal{P}^{(n)}_{\pmb{\mathcal{V}}}=\{{\rm P}^{(n)}_{\pmb{\mathcal{V}}} | \,\pmb{\mathcal{V}} \in \mathbb{R}^k \times S_k \times \mathbb{R}^+_0\},$$
where ${\rm P}^{(n)}_{\pmb{\mathcal{V}}}$ stands for the joint distribution of $\pmb{X}_1, \ldots, \pmb{X}_n$.
The rest of this section is devoted to the establishment of the crucial  ULAN property of the considered parametric family. Indeed, as explained in the Introduction, this Taylor type expansion of the log-likelihood ratio is the backbone of our construction of optimal tests for the tail parameter $\nu$ under unspecified location $\pmb{\mu}$ and unspecified scatter $\pmb{\Sigma}$.

\subsection{Notation and definitions}
In order to ease readability, we start by explaining some notations that will be useful in the sequel.  Set $M_k$ the class of $k \times k$ matrices. We write $\textnormal{vec}(\pmb{A})$ for the $k^2$-vector obtained by stacking the columns of  a matrix $\pmb{A} \in M_k$ on top of each other, and $\textnormal{vech}(\pmb{A})$ for the $k(k+1)/2$- subvector of $\textnormal{vec}(\pmb{A})$ where only the upper diagonal entries in $\pmb{A}$ are considered. Define $\pmb{P}_k$ as the $k(k+1)/2\times k^2$ matrix such that $\pmb{P}_k'(\textnormal{vech}(\pmb{A}))=\textnormal{vec}(\pmb{A})$ for any $k \times k$ symmetric matrix $\pmb{A}$. Denoting by $\pmb{e}_\ell$ the $\ell$th vector in the canonical basis of $\mathbb{R}^k$ and by $\pmb{I}_k$ the $k\times k$ identity matrix, let
$$\pmb{K}_k := \sum_{i,j=1}^k (\pmb{e}_i \pmb{e}_j') \otimes (\pmb{e}_j \pmb{e}_i') \quad \textnormal{and} \quad \pmb{J}_k:= \sum_{i,j=1}^k (\pmb{e}_i \pmb{e}_j') \otimes (\pmb{e}_i\pmb{e}_j')= \textnormal{vec}(\pmb{I}_k) (\textnormal{vec}(\pmb{I}_k))',$$
where the $k^2 \times k^2$  matrix $\pmb{K}_k$ is known as the \textit{commutation matrix}. With this notation, $\pmb{K}_k (\textnormal{vec}(\pmb{A}))=\textnormal{vec}(\pmb{A}')$ and $\pmb{J}_k (\textnormal{vec}(\pmb{A}))=(\textnormal{tr}\pmb{A})(\textnormal{vec}(\pmb{I}_k))$. Finally, we write $\pmb{A}^{\otimes2}$ for the usual Kronecker product $\pmb{A} \otimes \pmb{A}$.

For multivariate $t$ distributions, we define the \textit{score vector} $\pmb{\mathcal{L}}_{\pmb{\mathcal{V}}}(\pmb{x}):=\left(\pmb{\mathcal{L}}_{\pmb{\mathcal{V}}}^{(i)}(\pmb{x})\right)_{i=1,2,3}$ for $\pmb{\mathcal{V}}$ as 
$$\pmb{\mathcal{L}}_{\pmb{\mathcal{V}}}(\pmb{x}):= \begin{pmatrix} 
\pmb{D}_{\pmb{\mu}} \log(f_{\pmb{\mu}, \pmb{\Sigma}, \nu}(\pmb{x}))\\
\pmb{D}_{\textnormal{vech}(\pmb{\Sigma})} \log(f_{\pmb{\mu}, \pmb{\Sigma}, \nu}(\pmb{x}))\\
D_\nu  \log(f_{\pmb{\mu}, \pmb{\Sigma}, \nu}(\pmb{x}))
\end{pmatrix},$$
where $D_{\textnormal{vech}(\pmb{\Sigma})}$ and $D_{\pmb{\mu}}$ are the usual gradients and $D_\nu$ is the usual derivative. Direct computations lead to
$$ \pmb{\mathcal{L}}_{\pmb{\mathcal{V}}}(\pmb{x})=\begin{pmatrix} \frac{1+k/\nu}{1+\|\pmb{\Sigma}^{-1/2}(\pmb{x}-\pmb{\mu})\|^2/\nu}\pmb{\Sigma}^{-1}(\pmb{x}-\pmb{\mu}) \\
\frac{1}{2} \pmb{P}_k (\pmb{\Sigma}^{\otimes2})^{-1/2} \textnormal{vec} \left( \frac{1+k/\nu}{1+\|\pmb{\Sigma}^{-1/2}(\pmb{x}-\pmb{\mu})\|^2/\nu}\pmb{\Sigma}^{-1/2}(\pmb{x}-\pmb{\mu})(\pmb{x}-\pmb{\mu})' \pmb{\Sigma}^{-1/2}-\pmb{I}_k \right)\\
\frac{c_{\nu,k}'}{c_{\nu,k}}-\frac{1}{2} \log \left( 1+\|\pmb{\Sigma}^{-1/2}(\pmb{x}-\pmb{\mu})\|^2/\nu\right)+ \frac{\nu+k}{2\nu^2} \frac{\|\pmb{\Sigma}^{-1/2}(\pmb{x}-\pmb{\mu})\|^2}{1+\|\pmb{\Sigma}^{-1/2}(\pmb{x}-\pmb{\mu})\|^2/\nu}
 \end{pmatrix},$$
where $c_{\nu,k}'$ stands for the derivative of the mapping $\nu\mapsto c_{\nu,k}$. Now, note that the $\pmb{\mu}$-score is anti-symmetric in $\pmb{x}-\pmb{\mu}$, while the scores for $\pmb{\Sigma}$ and $\nu$ are symmetric in $\pmb{x}-\pmb{\mu}$. Hence, the symmetry properties with respect to $\pmb{x}-\pmb{\mu}$ entail that the resulting Fisher information matrix $\pmb{\Gamma}(\pmb{\mathcal{V}})$, given by the covariance matrix of the score vector, partitions into
\begin{align} \label{fisher} \begin{pmatrix} 
\pmb{\Gamma}_{11}(\pmb{\mathcal{V}}) & \pmb{0} & \pmb{0} \\
\pmb{0} & \pmb{\Gamma}_{22}(\pmb{\mathcal{V}}) & \pmb{\Gamma}_{23}(\pmb{\mathcal{V}}) \\
\pmb{0} & (\pmb{\Gamma}_{23}(\pmb{\mathcal{V}}))' & \Gamma_{33}(\nu)
\end{pmatrix}.\end{align}
Lange \textit{et al.} (1989) derived explicit expressions for the entries of $\pmb{\Gamma}(\pmb{\mathcal{V}})$ given by
$$\pmb{\Gamma}_{11}(\pmb{\mathcal{V}}) = \frac{\nu+k}{\nu+k+2} \pmb{\Sigma}^{-1},$$
$$\pmb{\Gamma}_{22}(\pmb{\mathcal{V}})=\frac{1}{4}\pmb{P}_k (\pmb{\Sigma}^{\otimes2})^{-1/2} \left[ \frac{\nu+k}{\nu+k+2} (\pmb{I}_{k^2}+\pmb{K}_k+\pmb{J}_k)-\pmb{J}_k\right](\pmb{\Sigma}^{\otimes2})^{-1/2}\pmb{P}_k',$$
$$\pmb{\Gamma}_{23}(\pmb{\mathcal{V}})=\frac{-1}{(\nu+k+2)(\nu+k)} \pmb{P}_k (\pmb{\Sigma}^{\otimes2})^{-1/2} \textnormal{vec}(\pmb{I}_k),$$
and
$$\Gamma_{33}(\nu)=-\frac{1}{2} \left[ \frac{1}{2}\, \psi'\left(\frac{\nu+k}{2}\right)-\frac{1}{2}\,\psi'\left(\frac{\nu}{2}\right)+\frac{k}{\nu(\nu+k)}-\frac{1}{\nu+k}+\frac{\nu+2}{\nu(\nu+k+2)}\right],$$
where $\psi'$ is the trigamma function. Ley and Paindaveine (2010) recently proved that the information matrix is finite and non-singular for all $\pmb{\mathcal{V}} \in \mathbb{R}^k \times S_k \times \mathbb{R}^+_0$. 

Finally, for two parameter sets $\pmb{\mathcal{V}}_1=(\pmb{\mu}_1, \pmb{\Sigma}_1, \nu_1)$ and $\pmb{\mathcal{V}}_2=(\pmb{\mu}_2, \pmb{\Sigma}_2, \nu_2)$, we will make throughout the slight abuse of notation and write $\pmb{\mathcal{V}}_1+\pmb{\mathcal{V}}_2$ for $(\pmb{\mu}_1+\pmb{\mu}_2, \pmb{\Sigma}_1+\pmb{\Sigma}_2, \nu_1+\nu_2)$.

\subsection{Uniform local asymptotic normality (ULAN)}

With these notations and definitions in hand, we are ready to state the main technical result of this paper, namely the announced ULAN property of the multivariate $t$ family with respect to the location, scatter and tail parameters.
\begin{theorem}\label{ulan}
For any $\pmb{\mu} \in \mathbb{R}^k$, $\pmb{\Sigma} \in S_k$ and $\nu > 0$, the multivariate Student $t$ family $\mathcal{P}^{(n)}_{\pmb{\mathcal{V}}}$ is ULAN at $\pmb{\mathcal{V}}$, with central sequence
$$\pmb{\Delta}^{(n)}(\pmb{\mathcal{V}}):= \left(\pmb{\Delta}^{(n)}_i(\pmb{\mathcal{V}})\right)_{i=1,2,3}=\left(\frac{1}{\sqrt{n}} \sum_{k=1}^n \pmb{\mathcal{L}}_{\pmb{\mathcal{V}}}^{(i)}(\pmb{X}_k)\right)_{i=1,2,3},$$
and information matrix $\pmb{\Gamma}(\pmb{\mathcal{V}})$. More precisely, for any $\pmb{\mathcal{V}}^{(n)}=(\pmb{\mu}^{(n)}, \pmb{\Sigma}^{(n)}, \nu^{(n)})= \pmb{\mathcal{V}}+O(n^{-1/2})$ and for any bounded sequence $\pmb{\tau}^{(n)}=(\pmb{\tau}_1^{(n)}, \pmb{\tau}_2^{(n)}, \tau_3^{(n)}) \in \mathbb{R}^k \times M_k \times \mathbb{R}$ such that $\pmb{\Sigma}^{(n)}+n^{-1/2}\pmb{\tau}^{(n)}_2\in S_k$ and $\nu^{(n)}+n^{-1/2}\tau_3^{(n)}>0$, we have
\begin{align*} \Lambda^{(n)}_{\pmb{\mathcal{V}}^{(n)}+n^{-1/2} \pmb{\tau}^{(n)}/ \pmb{\mathcal{V}}^{(n)}} (\pmb{X}_1, \ldots, \pmb{X}_n):=& \log \left( \mathrm{d}{\rm P}^{(n)}_{\pmb{\mathcal{V}}^{(n)}+n^{-1/2}\pmb{\tau}^{(n)}}/ \mathrm{d}{\rm P}^{(n)}_{\pmb{\mathcal{V}}^{(n)}}\right)\\
=& (\mathcal{T}^{(n)})'\pmb{\Delta}^{(n)}\left(\pmb{\mathcal{V}}^{(n)}\right)-\frac{1}{2} (\mathcal{T}^{(n)})' \pmb{\Gamma}(\pmb{\mathcal{V}})\mathcal{T}^{(n)}+o_{\rm P}(1),
\end{align*}
where $\mathcal{T}^{(n)}:=((\pmb{\tau}_1^{(n)})',({\rm vech}(\pmb\tau_2^{(n)}))',\tau_3^{(n)})'$, and $\pmb{\Delta}^{(n)}\left(\pmb{\mathcal{V}}^{(n)}\right) \stackrel{\mathcal{L}}{\longrightarrow} \mathcal{N}_{k+k(k+1)/2+1}\left(\pmb{0}, \pmb{\Gamma}(\pmb{\mathcal{V}}) \right)$, both under ${\rm P}_{\pmb{\mathcal{V}}^{(n)}}^{(n)}$, as $n \to \infty$.
\end{theorem}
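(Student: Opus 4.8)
The plan is to obtain the ULAN property from the classical sufficient conditions for uniform local asymptotic normality of an i.i.d.\ parametric family, namely differentiability in quadratic mean (DQM) of the root-density together with finiteness and continuity of the Fisher information, in the spirit of Le Cam and Yang (2000) or Swensen (1985). Since the observations are i.i.d., it suffices to analyze a single observation and then sum over $k=1,\dots,n$. The structural fact I would exploit is that $f_{\pmb{\mu},\pmb{\Sigma},\nu}$ is elliptically symmetric, with radial density proportional to $(1+r^2/\nu)^{-(\nu+k)/2}$; this allows me to treat the location and scatter directions essentially as in the known elliptical ULAN results of Hallin and Paindaveine, so that the genuinely new ingredients are the differentiability in the tail direction $\nu$ and the \emph{joint} (uniform) statement. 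A convenient auxiliary device is the normal variance-mixture representation of the $t$ law, which reduces several radial moment bounds to Gamma integrals.

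First, I would establish pointwise differentiability: for almost every $\pmb{x}$, the map $\pmb{\mathcal{V}}\mapsto \sqrt{f_{\pmb{\mathcal{V}}}(\pmb{x})}$ is continuously differentiable with gradient $\tfrac12\,\pmb{\mathcal{L}}_{\pmb{\mathcal{V}}}(\pmb{x})\sqrt{f_{\pmb{\mathcal{V}}}(\pmb{x})}$, where $\pmb{\mathcal{L}}_{\pmb{\mathcal{V}}}$ is the score vector displayed above. The derivatives with respect to $\pmb{\mu}$ and $\textnormal{vech}(\pmb{\Sigma})$ are routine matrix-calculus computations (using the $\pmb{P}_k$, $\pmb{K}_k$, $\pmb{J}_k$ machinery and the reparametrization through $(\pmb{\Sigma}^{\otimes 2})^{-1/2}$), while the $\nu$-derivative involves the logarithmic term and $c_{\nu,k}'/c_{\nu,k}$, i.e.\ digamma functions. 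I would then upgrade this pointwise derivative to DQM: by a standard lemma (e.g.\ van der Vaart's Lemma 7.6), a.e.\ pointwise differentiability together with finiteness and continuity of the Fisher information integrals $\pmb{\mathcal{V}}\mapsto\pmb{\Gamma}(\pmb{\mathcal{V}})$ suffices to conclude DQM with exactly the stated derivative. Here I would invoke Ley and Paindaveine (2010) for finiteness and non-singularity of $\pmb{\Gamma}(\pmb{\mathcal{V}})$ and Lange \emph{et al.}\ (1989) for the explicit entries in (\ref{fisher}), and I would additionally verify that each block of $\pmb{\Gamma}(\pmb{\mathcal{V}})$ depends continuously on $\pmb{\mathcal{V}}$.

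Second, I would pass from fixed-point LAN to the uniform statement along a perturbed sequence $\pmb{\mathcal{V}}^{(n)}=\pmb{\mathcal{V}}+O(n^{-1/2})$. The quadratic expansion of $\Lambda^{(n)}_{\pmb{\mathcal{V}}^{(n)}+n^{-1/2}\pmb{\tau}^{(n)}/\pmb{\mathcal{V}}^{(n)}}$ then follows from DQM combined with the continuity of $\pmb{\mathcal{V}}\mapsto\pmb{\Gamma}(\pmb{\mathcal{V}})$ and of the root-density derivative, which guarantees that both centering at the moving point $\pmb{\mathcal{V}}^{(n)}$ and replacing $\pmb{\Gamma}(\pmb{\mathcal{V}}^{(n)})$ by $\pmb{\Gamma}(\pmb{\mathcal{V}})$ in the quadratic term contribute only $o_{\rm P}(1)$ remainders. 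Because the central sequence $\pmb{\Delta}^{(n)}(\pmb{\mathcal{V}}^{(n)})$ is evaluated at the \emph{same} parameter that generates the data, under ${\rm P}^{(n)}_{\pmb{\mathcal{V}}^{(n)}}$ it is a sum of i.i.d.\ mean-zero terms with covariance $\pmb{\Gamma}(\pmb{\mathcal{V}}^{(n)})\to\pmb{\Gamma}(\pmb{\mathcal{V}})$, so the asymptotic normality is delivered by a triangular-array central limit theorem after a Lindeberg check; mutual contiguity of ${\rm P}^{(n)}_{\pmb{\mathcal{V}}^{(n)}}$ and ${\rm P}^{(n)}_{\pmb{\mathcal{V}}^{(n)}+n^{-1/2}\pmb{\tau}^{(n)}}$ follows as a byproduct via Le Cam's first lemma.

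I expect the tail direction to be the main obstacle. Verifying DQM and the requisite uniform integrability in $\nu$ is delicate because the $\nu$-score contains the term $\log\big(1+\|\pmb{\Sigma}^{-1/2}(\pmb{x}-\pmb{\mu})\|^2/\nu\big)$, whose square must stay integrable against the heavy-tailed density uniformly over a shrinking neighborhood of $\nu$, and because the normalizing-constant derivative introduces di- and trigamma functions whose continuity in $\nu$ has to be controlled. Handling these terms \emph{jointly} with the scatter block — in particular the off-diagonal coupling $\pmb{\Gamma}_{23}(\pmb{\mathcal{V}})$ between $\textnormal{vech}(\pmb{\Sigma})$ and $\nu$ — and doing so uniformly along $\pmb{\mathcal{V}}^{(n)}$ is where the bulk of the technical effort lies; the mixture representation again helps by converting the awkward radial bounds into tractable Gamma-integral estimates.
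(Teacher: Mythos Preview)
Your proposal is correct and follows essentially the same route as the paper: verify continuous differentiability of $\pmb{\mathcal{V}}\mapsto f_{\pmb{\mathcal{V}}}^{1/2}(\pmb{x})$, check that the Fisher information is finite and continuous (citing Lange \emph{et al.}\ (1989) and Ley and Paindaveine (2010)), apply van der Vaart's Lemma~7.6 to obtain DQM, and then conclude ULAN from the standard i.i.d.\ theory (the paper cites Theorem~7.2 of van der Vaart (1998)). The only difference is one of emphasis: you anticipate substantial technical difficulties in the $\nu$-direction (integrability of the squared log term, uniform control along $\pmb{\mathcal{V}}^{(n)}$) and propose auxiliary devices such as the normal variance-mixture representation and the Hallin--Paindaveine elliptical ULAN machinery, whereas the paper treats the verification of (i) and (ii) as immediate for this smooth parametric family and dispatches the whole proof in a few lines.
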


\begin{proof} 
The proof is quite immediate since we are not working within a semi-parametric family of distributions (hence we do not have to deal with an infinite-dimensional nuisance parameter); the problem considered indeed involves a parametric family of distributions with densities meeting the most classical regularity conditions.
 In particular, one readily obtains that (i) $(\pmb{\mu}', (\textnormal{vech}(\pmb{\Sigma}))', \nu)' \mapsto f^{1/2}_{\pmb{\mu}, \pmb{\Sigma}, \nu}(\pmb{x})$ is continuously differentiable for every $\pmb{x}\in\mathbb{R}^k$ and (ii) the associated Fisher information matrix is well defined and continuous in $(\pmb{\mu}', (\textnormal{vech}(\pmb{\Sigma}),\nu)'$. Thus, by Lemma 7.6 of van der Vaart (1998), $(\pmb{\mu}', (\textnormal{vech}(\pmb{\Sigma}))', \nu)' \mapsto f^{1/2}_{\pmb{\mu}, \pmb{\Sigma}, \nu}(\pmb{x})$ is differentiable in quadratic mean, and the ULAN property follows from Theorem 7.2 of van der Vaart (1998). This completes the proof.
 \end{proof}

 Note that the term \og uniform" in ULAN  indicates that the local asymptotic normality  property holds not only at $\pmb{\mathcal{V}}$, but in a neighborhood of that point. Further,  the ULAN structure entails the following \textit{asymptotic linearity} property of the central sequence:
\begin{align}\label{uniform} \pmb{\Delta}^{(n)}\left(\pmb{\mathcal{V}}+n^{-1/2} \pmb{\tau}^{(n)}\right)=\pmb{\Delta}^{(n)}(\pmb{\mathcal{V}})-\pmb{\Gamma}(\pmb{\mathcal{V}})\mathcal{T}^{(n)}+o_{{\rm P}}(1),\end{align}
as $n \to \infty$, under ${\rm P}_{\pmb{\mathcal{V}}}^{(n)}$. As we shall see in the next section, the preceding asymptotic linearity property is needed, first to construct the optimal tests for tail weight under unspecified location and scatter and second to derive explicit expressions of the power function for those tests.

\section{Locally and asymptotically optimal tests for tail weight}\label{three}

In this section, we make use of the ULAN property established in the previous section in order to construct locally and asymptotically  optimal tests for the tail parameter $\nu$. To this end, we apply the  Le Cam methodology \,for parametric tests to the present context. We shall proceed in three steps: first, we explain how to estimate the nuisance parameters $\pmb{\mu}$ and $\pmb{\Sigma}$ (Section~\ref{threeone}), second we show how to modify the central sequence $\pmb{\Delta}^{(n)}$ in order to take into account the asymptotic correlation between the scatter and tail parameters (Section~\ref{threetwo}), and finally we write out the test statistic and study its asymptotic properties (Section~\ref{threethree}).

\subsection{Estimation of the nuisance parameters}\label{threeone}

Clearly, it is hard to think of any practical problem where the location and scatter are specified. We thus concentrate on asymptotic optimality under unspecified location and scatter: $\pmb{\mu}$ and $\pmb{\Sigma}$ play the roles of nuisance parameters, whereas $\nu$ is the parameter of interest. In particular, the central sequence $\pmb{\Delta}^{(n)}(\pmb{\mathcal{V}}^{(n)})$ given in Theorem \ref{ulan}  (or a modified version of it, see the end of Section~\ref{threetwo}) depends on the unknown values of $\pmb{\mu}$ and $\pmb{\Sigma}$, hence is not yet a true statistic. This problem can be solved by replacing $(\pmb{\mu}, \pmb{\Sigma})$ with an adequate estimator $(\pmb{\hat{\mu}}^{(n)}, \pmb{\hat{\Sigma}}^{(n)})$ whilst, of course, paying attention to the asymptotic effects of such a substitution.  It is precisely here that the asymptotic linearity property comes in handy: our aim consists in using relation (\ref{uniform}) with $\pmb{\tau}^{(n)}=\left(n^{1/2}(\pmb{\hat{\mu}}^{(n)}-\pmb{\mu}), n^{1/2}(\pmb{\hat{\Sigma}}^{(n)}-\pmb{\Sigma}), 0\right)$, providing the asymptotic link between $\pmb{\Delta}^{(n)}\left(\hat{\pmb{\mu}}^{(n)},\hat{\pmb{\Sigma}}^{(n)},\nu\right)$ and $\pmb{\Delta}^{(n)}\left({\pmb{\mu}},{\pmb{\Sigma}},\nu\right)$. However, this replacement is not straightforward and imposes one more condition on the estimators, which is summarized in the following assumption which we state for a general parametric model $\mathcal{P}^{(n)}_\lambda$ (see Kreiss 1987, where such replacements have been worked out in detail).

\begin{Assumption} The sequence of estimators $\hat{\lambda}^{(n)}$ defined for a sequence of experiments $\mathcal{P}_{\lambda}^{(n)}=\{ {\rm P}_\lambda^{(n)} |\, \lambda \in~\Lambda\}$ indexed by a parameter $\lambda$ belonging to the parameter space $\Lambda$ is 
\begin{enumerate}[(i)]
\item root-$n$ consistent; that is, $n^{1/2}(\hat{\lambda}^{(n)}-\lambda)=O_{{\rm P}}(1)$ under ${{\rm P}}^{(n)}_{\lambda}$ for all $\lambda \in \Lambda$;
\item locally and asymptotically discrete; that is, the number of possible values of $\hat{\lambda}^{(n)}$ in $\lambda$-centered balls with $O(n^{-1/2})$ radius is uniformly bounded, as $n \to \infty$.
\end{enumerate}
\end{Assumption}
\noindent Both estimators $\hat{\pmb{\mu}}^{(n)}$ and $\hat{\pmb{\Sigma}}^{(n)}$ will have to satisfy this requirement in what follows. Local asymptotic discreteness is a concept that goes back to Le Cam (1986) and is quite standard in parameter estimation, since it turns root-$n$ consistent estimators into uniformly root-$n$ consistent ones (see Lemma 4.4 in Kreiss,~1987). Denoting by $\lceil x \rceil$ the smallest integer larger than or equal to $x$ and by $c_0$ an arbitrary positive constant that does not depend on $n$, any sequence of estimators $\hat{\lambda}^{(n)}$ of $\lambda$ can be discretized by replacing it with
$$\hat{\lambda}^{(n)}_{\sharp}:=c_0^{-1}n^{-1/2}\textnormal{sign}\left(\hat{\lambda}^{(n)}\right)\left \lceil  c_0n^{1/2}\hat{\lambda}^{(n)}\right\rceil.$$
In practice, however, such discretization is not required, as $c_0$ can be chosen large enough to make discretization be irrelevant at the fixed sample size $n$. Assumption A(ii) is thus a purely technical requirement with little practical implications, so that the preliminary estimator essentially only needs to be consistent at the standard root-$n$ rate.  
Obvious examples of such estimators for $\pmb{\mu}$ and $\pmb{\Sigma}$ are of course the sample mean $\pmb{\bar{X}}$ and (a multiple of) the sample covariance matrix ${\pmb S}(1-2/\nu_0)$ (with $\nu_0$ the value under the null). However, it is well-known that the latter require first respectively second moments in order to exist, and second respectively fourth moments in order to be root-$n$ consistent. In the univariate case, simple moment-free estimators for location and scale are the median and the median of absolute deviations (MAD). In higher dimensions, if moment conditions are to be avoided, one can for example use as $\hat{\pmb{\mu}}^{(n)}$ the spatial median of M\"ott\"onen and Oja (1995) and  as $\hat{\pmb{\Sigma}}^{(n)}$ Tyler~(1987)'s shape estimator (adjusted to be a scatter estimator by multiplication of a scale estimator), or construct location and scatter estimators  via the Minimum Covariance Determinant (MCD) method, the root-$n$ consistency of which can be found in Cator and Lopuha\"a~(2012). This underlines one of the advantages of our proposal: the user can freely choose his/her  preferred root-$n$ consistent location and scatter estimators (according to the needs of a given situation),  and is not forced to use the more complicated maximum likelihood estimators for location and scatter.

\subsection{An efficient central sequence for tail weight}\label{threetwo}

The block-diagonal structure of the Fisher information matrix (\ref{fisher}) confirms that the blocks $\pmb{\mu}$ and $(\pmb{\Sigma}, \nu)$ are asymptotically uncorrelated and thus the non-specification of  $\pmb{\mu}$ does not affect, asymptotically, inferential procedures for $\nu$ and/or $\pmb\Sigma$. More precisely, it follows from the ULAN property that replacing $\pmb{\mu}$ with an estimator $\pmb{\hat{\mu}}^{(n)}$ satisfying Assumption~A  has no influence, asymptotically, on $\Delta^{(n)}_3(\pmb{\mathcal{V}})$, the $\nu$-part of the central sequence. This can be seen via the asymptotic linearity (\ref{uniform}) of the central sequence combined with Lemma 4.4 in Kreiss (1987) (if unclear, see the proof of Lemma \ref{linea}, where we develop this argument). On the contrary, a non-zero asymptotic covariance $\pmb{\Gamma}_{23}(\pmb{\mathcal{V}})$ between the scatter and the tail part of the central sequence implies that the cost of not knowing the actual value of $\pmb{\Sigma}$ is strictly positive when performing inference on $\nu$.  This means that a local perturbation of $\pmb{\Sigma}$ has the same asymptotic impact on $\Delta^{(n)}_3(\pmb{\mathcal{V}})$ as a local perturbation of $\nu$. This impact will be taken into account in what follows.

The ULAN structure and the convergence of the local experiments to the Gaussian shift experiment
$$\begin{pmatrix} \pmb{\Delta}_2 \\ \Delta_3 \end{pmatrix} \sim \mathcal{N}_{k(k+1)/2+1}\left( \begin{pmatrix} \pmb{\Gamma}_{22}(\pmb{\mathcal{V}})  &  \pmb{\Gamma}_{23}(\pmb{\mathcal{V}}) \\  (\pmb{\Gamma}_{23}(\pmb{\mathcal{V}}))' &  \Gamma_{33}(\nu)\end{pmatrix}\begin{pmatrix} {\rm vech}(\pmb{\tau}_2) \\ \tau_3 \end{pmatrix}, \;  \begin{pmatrix} \pmb{\Gamma}_{22}(\pmb{\mathcal{V}})  &  \pmb{\Gamma}_{23}(\pmb{\mathcal{V}}) \\  (\pmb{\Gamma}_{23}(\pmb{\mathcal{V}}))' &  \Gamma_{33}(\nu)\end{pmatrix}\right),$$
where $(\pmb{\tau}_2, \tau_3) \in M_k\times \mathbb{R}$, imply that 
locally and asymptotically optimal inference on $\nu$ should be based on the residual of the regression of $\Delta_3$ with respect to $\pmb{\Delta}_2$, computed at $\Delta^{(n)}_3(\pmb{\mathcal{V}})$ and $\pmb{\Delta}^{(n)}_2(\pmb{\mathcal{V}})$. The resulting \textit{efficient central sequence} for $\nu$ thus takes the form
$$\Delta_3^{(n)*}(\pmb{\mathcal{V}}):=\Delta^{(n)}_3(\pmb{\mathcal{V}})-(\pmb{\Gamma}_{23}(\pmb{\mathcal{V}}))' (\pmb{\Gamma}_{22}(\pmb{\mathcal{V}}))^{-1} \pmb{\Delta}^{(n)}_2(\pmb{\mathcal{V}}).$$
 The projection of $\Delta_3^{(n)}(\pmb{\mathcal{V}})$ onto the subspace orthogonal to $ \pmb{\Delta}_2^{(n)}(\pmb{\mathcal{V}})$ ensures that the new efficient central sequence is asymptotically uncorrelated with the central sequences corresponding to $\pmb{\mu}$ and $\pmb{\Sigma}$.
Under ${\rm P}_{\pmb{\mathcal{V}}}^{(n)}$, the efficient central sequence for tail weight  is asymptotically normal, with mean zero and covariance (\textit{efficient Fisher information for tail weight}) 
\begin{align*}\Gamma_{33}^*(\pmb{\mathcal{V}})&:=\Gamma_{33}(\nu)-(\pmb{\Gamma}_{23}(\pmb{\mathcal{V}}))' (\pmb{\Gamma}_{22}(\pmb{\mathcal{V}}))^{-1}\pmb{\Gamma}_{23}(\pmb{\mathcal{V}}),
\end{align*}
which is non-zero (see Theorem 4.2 from Ley and Paindaveine, 2010). In the one-dimensional case, the efficient central sequence and Fisher information for $\nu$ reduce after some elementary calculations to
\begin{align*}\Delta_3^{(n)*}(\pmb{\mathcal{V}})=&\frac{1}{\sqrt{n}} \sum_{i=1}^n \left\{ \frac{\nu+3}{2\nu^2} \left(\frac{X_i-\mu}{\sigma}\right)^2\left(1+\frac{(X_i-\mu)^2}{\sigma^2\nu}\right)^{-1} -\frac{1}{2}  \log \left( 1+\frac{(X_i-\mu)^2}{\sigma^2 \nu}\right)-\frac{c'_\nu}{c_\nu}-\frac{1}{\nu(\nu+1)}\right\}\end{align*}
and
\begin{align*}\Gamma_{33}^*(\nu)= -\frac{1}{2}\left[ \frac{1}{2}\, \psi'\left(\frac{\nu+1}{2}\right)-\frac{1}{2}\,\psi'\left(\frac{\nu}{2}\right)+\frac{\nu+3}{\nu(\nu+1)^2}\right],
\end{align*}
where $\sigma^2:=\Sigma$. In particular, if $k=1$, $\Gamma_{33}^*(\nu)$ does not depend on the location nor on the scale (whence the notation). In the following result (whose proof is deferred to the Appendix), we derive the asymptotic linearity property of the efficient central sequence for tail weight.

\begin{lemma} \label{linea}
For any  $\pmb{\mu} \in \mathbb{R}^k$, $\pmb{\Sigma} \in S_k$ and $\nu>0$, and for any bounded sequence $\pmb{\tau}^{(n)}=\left(\pmb{\tau}_1^{(n)}, \pmb{\tau}_2^{(n)}, \tau_3^{(n)}\right) \in \mathbb{R}^k \times M_k\times \mathbb{R}$ such that $\pmb{\Sigma}+n^{-1/2}\pmb{\tau}^{(n)}_2\in S_k$ and $\nu+n^{-1/2}\tau_3^{(n)}>0$, we have that, under ${\rm P}_{\pmb{\mathcal{V}}}^{(n)}$ and as~$n\rightarrow\infty$,
$$\Delta_3^{(n)*}\left(\pmb{\mathcal{V}}+n^{-1/2}\pmb{\tau}^{(n)}\right)=\Delta_3^{(n)*}(\pmb{\mathcal{V}})-\Gamma_{33}^*(\pmb{\mathcal{V}})\tau_3^{(n)}+o_{{\rm P}}(1).$$
In particular, if $\tau_3^ {(n)}=0$, 
$$\Delta_3^{(n)*}\left(\pmb{\mu}+n^{-1/2}\pmb{\tau}_1^{(n)}, \pmb{\Sigma}+n^{-1/2}\pmb{\tau}_2^{(n)}, \nu\right)=\Delta_3^{(n)*}\left(\pmb{\mu}, \pmb{\Sigma}, \nu \right)+o_{{\rm P}}(1).$$
\end{lemma}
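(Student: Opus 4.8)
The plan is to derive the claim directly from the asymptotic linearity (\ref{uniform}) of the full central sequence, using the block structure (\ref{fisher}) of $\pmb{\Gamma}(\pmb{\mathcal{V}})$ to handle the bookkeeping. First I would read off from (\ref{uniform}) the scatter- and tail-components separately. Since the off-diagonal blocks in the first row and column of $\pmb{\Gamma}(\pmb{\mathcal{V}})$ vanish, the location perturbation $\pmb{\tau}_1^{(n)}$ drops out of both, leaving, under ${\rm P}_{\pmb{\mathcal{V}}}^{(n)}$,
\begin{align*}
\pmb{\Delta}_2^{(n)}\left(\pmb{\mathcal{V}}+n^{-1/2}\pmb{\tau}^{(n)}\right)&=\pmb{\Delta}_2^{(n)}(\pmb{\mathcal{V}})-\pmb{\Gamma}_{22}(\pmb{\mathcal{V}})\,{\rm vech}(\pmb{\tau}_2^{(n)})-\pmb{\Gamma}_{23}(\pmb{\mathcal{V}})\tau_3^{(n)}+o_{\rm P}(1),\\
\Delta_3^{(n)}\left(\pmb{\mathcal{V}}+n^{-1/2}\pmb{\tau}^{(n)}\right)&=\Delta_3^{(n)}(\pmb{\mathcal{V}})-(\pmb{\Gamma}_{23}(\pmb{\mathcal{V}}))'\,{\rm vech}(\pmb{\tau}_2^{(n)})-\Gamma_{33}(\nu)\tau_3^{(n)}+o_{\rm P}(1).
\end{align*}

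The one point requiring care is that, by definition, $\Delta_3^{(n)*}(\pmb{\mathcal{V}}+n^{-1/2}\pmb{\tau}^{(n)})$ carries the regression weights $(\pmb{\Gamma}_{23})'(\pmb{\Gamma}_{22})^{-1}$ evaluated at the \emph{moving} point $\pmb{\mathcal{V}}+n^{-1/2}\pmb{\tau}^{(n)}$, not at $\pmb{\mathcal{V}}$. I would first note that $\pmb{\Delta}_2^{(n)}(\pmb{\mathcal{V}}+n^{-1/2}\pmb{\tau}^{(n)})=O_{\rm P}(1)$ under ${\rm P}_{\pmb{\mathcal{V}}}^{(n)}$: this is immediate from the first display, since $\pmb{\Delta}_2^{(n)}(\pmb{\mathcal{V}})$ is asymptotically normal by Theorem \ref{ulan} and the subtracted terms are bounded deterministic ($\pmb{\tau}^{(n)}$ being bounded). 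Because $\pmb{\mathcal{V}}\mapsto\pmb{\Gamma}_{22}(\pmb{\mathcal{V}})$ and $\pmb{\mathcal{V}}\mapsto\pmb{\Gamma}_{23}(\pmb{\mathcal{V}})$ are continuous with $\pmb{\Gamma}_{22}$ nonsingular (Ley and Paindaveine, 2010), the weights at the moving point differ from those at $\pmb{\mathcal{V}}$ by $o(1)$; multiplying this $o(1)$ discrepancy by the $O_{\rm P}(1)$ factor produces an $o_{\rm P}(1)$ term, so I may freeze the weights at $\pmb{\mathcal{V}}$.

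With the weights frozen at $\pmb{\mathcal{V}}$, I would substitute the two expansions into
\begin{align*}
\Delta_3^{(n)*}\left(\pmb{\mathcal{V}}+n^{-1/2}\pmb{\tau}^{(n)}\right)=\Delta_3^{(n)}\left(\pmb{\mathcal{V}}+n^{-1/2}\pmb{\tau}^{(n)}\right)-(\pmb{\Gamma}_{23}(\pmb{\mathcal{V}}))'(\pmb{\Gamma}_{22}(\pmb{\mathcal{V}}))^{-1}\pmb{\Delta}_2^{(n)}\left(\pmb{\mathcal{V}}+n^{-1/2}\pmb{\tau}^{(n)}\right)+o_{\rm P}(1).
\end{align*}
The ${\rm vech}(\pmb{\tau}_2^{(n)})$ contributions cancel because $(\pmb{\Gamma}_{23})'(\pmb{\Gamma}_{22})^{-1}\pmb{\Gamma}_{22}=(\pmb{\Gamma}_{23})'$, while the two $\tau_3^{(n)}$ contributions combine into $-[\Gamma_{33}(\nu)-(\pmb{\Gamma}_{23}(\pmb{\mathcal{V}}))'(\pmb{\Gamma}_{22}(\pmb{\mathcal{V}}))^{-1}\pmb{\Gamma}_{23}(\pmb{\mathcal{V}})]\tau_3^{(n)}=-\Gamma_{33}^*(\pmb{\mathcal{V}})\tau_3^{(n)}$. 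The remaining terms reassemble into exactly $\Delta_3^{(n)*}(\pmb{\mathcal{V}})$, yielding the first identity; the second is the specialization $\tau_3^{(n)}=0$, for which the drift term vanishes.

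The only genuinely delicate step is the freezing of the regression weights in the second paragraph; everything else is linear algebra stacked on (\ref{uniform}). That step is routine here because $\pmb{\Gamma}_{22}$ and $\pmb{\Gamma}_{23}$ are explicit continuous functions of $\pmb{\mathcal{V}}$ with $\pmb{\Gamma}_{22}$ invertible, and because the boundedness of $\pmb{\tau}^{(n)}$ secures the $O_{\rm P}(1)$ control of the perturbed scatter-score; no local discreteness is needed since $\pmb{\tau}^{(n)}$ is deterministic. I therefore expect a short write-up, with the block-diagonality of (\ref{fisher}) doing the conceptual work of eliminating the location nuisance.
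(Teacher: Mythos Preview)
Your proposal is correct and follows essentially the same approach as the paper: both derive the result from the asymptotic linearity (\ref{uniform}) applied blockwise to $\pmb{\Delta}_2^{(n)}$ and $\Delta_3^{(n)}$, combined with the continuity of $\pmb{\mathcal{V}}\mapsto\pmb{\Gamma}(\pmb{\mathcal{V}})$ and the $O_{\rm P}(1)$-boundedness of the central sequence to handle the regression weights at the moving point. The only cosmetic difference is that you freeze the weights before substituting while the paper substitutes first and then invokes continuity; your explicit use of ${\rm vech}(\pmb{\tau}_2^{(n)})$ is in fact slightly more careful than the paper's notation.
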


It is precisely here that Assumption A (and especially the local asymptotic discreteness) comes into play: for any estimators $\pmb{\hat{\mu}}^{(n)}$  and $\pmb{\hat{\Sigma}}^{(n)}$ satisfying Assumption A, Lemma 4.4 in Kreiss (1987) ensures that the asymptotic linearity property of Lemma \ref{linea} holds after replacement of $\left(\pmb{\tau}_1^{(n)}, \pmb{\tau}_2^{(n)},0\right)$ by the random quantity $\left( n^{1/2}(\pmb{\hat{\mu}}^{(n)}-\pmb{\mu}), n^{1/2}\left(\pmb{\hat{\Sigma}}^{(n)}-\pmb{\Sigma}\right),0\right)$, which entails that, asymptotically under ${\rm P}_{\pmb{\mathcal{V}}}^{(n)}$,
\begin{align} \label{lineaeff} \Delta_3^{(n)*}\left(\pmb{\hat{\mu}}^{(n)},\pmb{\hat{\Sigma}}^{(n)}, \nu \right)=\Delta_3^{(n)*}\left(\pmb{\mu}, \pmb{\Sigma}, \nu \right)+o_{{\rm P}}(1).\end{align}
The latter (asymptotic) equality in probability will allow us to derive the asymptotic behavior of our optimal tests for tail weight in the next section.

\subsection{Simple optimal tests for tail weight}\label{threethree}

As described in the Introduction, the ULAN property allows us to translate  optimal procedures from Gaussian shift experiments into our Student $t$ model. This, in combination with the developments of the previous section, entails that the optimal test $\phi^{(n)}_{\nu_0}$ for $\mathcal{H}_0: \nu =\nu_0$ (with $\nu_0>0$  fixed) in the Student $t$ family with unspecified location $\pmb{\mu}$ and scatter $\pmb{\Sigma}$ should be based on the efficient central sequence for tail weight. More concretely, $\phi^{(n)}_{\nu_0}$ rejects the null (at asymptotic level $\alpha$) in favor of $\mathcal{H}_1^{\neq}: \nu \neq \nu_0$ whenever the test statistic $|Q^{(n)}_{\nu_0}|$, with
$$Q^{(n)}_{\nu_0}:=\frac{\Delta_3^{(n)*}\left(\pmb{\hat{\mu}}^{(n)}, \pmb{\hat{\Sigma}}^{(n)}, \nu_0\right)}{\sqrt{\Gamma_{33}^*\left(\pmb{\hat{\mu}}^{(n)}, \pmb{\hat{\Sigma}}^{(n)}, \nu_0\right)}}$$
where the estimators $\pmb{\hat{\mu}}^{(n)}$ and $\pmb{\hat{\Sigma}}^{(n)}$  satisfy Assumption A, exceeds  $z_{\alpha/2}$, the $\alpha/2$-upper quantile of a standard Gaussian distribution. Thanks to (\ref{lineaeff}), we can derive  the asymptotic properties of $Q^{(n)}_{\nu_0}$, and hence also of $\phi^{(n)}_{\nu_0}$, in the next theorem (see the Appendix for a proof).

\begin{theorem}\label{maximin}
Fix $\nu_0>0$ and suppose that $\hat{\pmb{\mu}}^{(n)}$ and $\hat{\pmb{\Sigma}}^{(n)}$ satisfy Assumption~A. Then \begin{enumerate}[(i)]
\item $Q^{(n)}_{\nu_0}$ is asymptotically standard normal under $\bigcup_{\pmb{\mu} \in \mathbb{R}^k} \bigcup_{\pmb{\Sigma}\in S_{k}}{\rm P}^{(n)}_{(\pmb{\mu}, \pmb{\Sigma}, \nu_0)}$;
\item $Q^{(n)}_{\nu_0}$ is asymptotically  normal with~mean~$\tau_3\sqrt{\Gamma_{33}^*(\pmb{\mu}, \pmb{\Sigma}, \nu_0)}$~and variance 1 under $\bigcup_{\pmb{\mu} \in \mathbb{R}^k} \bigcup_{\pmb{\Sigma}\in S_{k}}{\rm P}^{(n)}_{(\pmb{\mu}, \pmb{\Sigma}, \nu_0+n^{-1/2}\tau_3^{(n)})}$, where  $\tau_3^{(n)} \in \mathbb{R}$ is a bounded sequence satisfying $\nu_0+n^{-1/2}\tau_3^{(n)}>0$ and $\tau_3:=\lim_{n \to \infty} \tau_3^{(n)}$;
\item the sequence of tests $\phi^{(n)}_{\nu_0}$ has asymptotic level $\alpha$ under $\mathcal{H}_0:=\bigcup_{\pmb{\mu} \in \mathbb{R}^k} \bigcup_{\pmb{\Sigma}\in S_{k}}{\rm P}^{(n)}_{(\pmb{\mu}, \pmb{\Sigma}, \nu_0)}$ and is locally and asymptotically maximin for testing $\mathcal{H}_0$ against  \\$\mathcal{H}_1^{\neq}:=\bigcup_{\pmb{\mu} \in \mathbb{R}^k} \bigcup_{\pmb{\Sigma}\in S_{k}}\bigcup_{0<\nu\neq\nu_0}{\rm P}^{(n)}_{(\pmb{\mu}, \pmb{\Sigma}, \nu)}$.
\end{enumerate}
\end{theorem}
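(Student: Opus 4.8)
The plan is to derive all three assertions from the asymptotic behaviour of the efficient central sequence $\Delta_3^{(n)*}$ together with the ULAN structure of Theorem~\ref{ulan}; throughout I write $\pmb{\mathcal{V}}_0:=(\pmb{\mu},\pmb{\Sigma},\nu_0)$. For part~(i) the first step is to invoke the asymptotic equality~(\ref{lineaeff}), which under ${\rm P}^{(n)}_{\pmb{\mathcal{V}}_0}$ lets me replace the estimated nuisance parameters by their true values in the numerator, namely $\Delta_3^{(n)*}(\hat{\pmb{\mu}}^{(n)},\hat{\pmb{\Sigma}}^{(n)},\nu_0)=\Delta_3^{(n)*}(\pmb{\mu},\pmb{\Sigma},\nu_0)+o_{\rm P}(1)$. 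The ULAN property then yields $\Delta_3^{(n)*}(\pmb{\mathcal{V}}_0)\stackrel{\mathcal{L}}{\longrightarrow}\mathcal{N}(0,\Gamma_{33}^*(\pmb{\mathcal{V}}_0))$. For the denominator I would combine root-$n$ consistency of $\hat{\pmb{\mu}}^{(n)}$ and $\hat{\pmb{\Sigma}}^{(n)}$ with continuity of $\pmb{\mathcal{V}}\mapsto\Gamma_{33}^*(\pmb{\mathcal{V}})$ to get $\Gamma_{33}^*(\hat{\pmb{\mu}}^{(n)},\hat{\pmb{\Sigma}}^{(n)},\nu_0)\to\Gamma_{33}^*(\pmb{\mathcal{V}}_0)$ in probability; a single application of Slutsky's lemma gives $Q^{(n)}_{\nu_0}\stackrel{\mathcal{L}}{\longrightarrow}\mathcal{N}(0,1)$ for every fixed $(\pmb{\mu},\pmb{\Sigma})$, the common limit being free of the nuisance parameters.

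The heart of the argument is part~(ii), which I would obtain via Le Cam's third lemma. Specializing the ULAN expansion of Theorem~\ref{ulan} to $\mathcal{T}^{(n)}=(\pmb{0},\pmb{0},\tau_3^{(n)})$ gives, under ${\rm P}^{(n)}_{\pmb{\mathcal{V}}_0}$, the log-likelihood ratio $\Lambda^{(n)}=\tau_3^{(n)}\Delta_3^{(n)}(\pmb{\mathcal{V}}_0)-\tfrac12(\tau_3^{(n)})^2\Gamma_{33}(\nu_0)+o_{\rm P}(1)$. I would next establish the joint asymptotic normality of $(\Delta_3^{(n)*}(\pmb{\mathcal{V}}_0),\Lambda^{(n)})$ under the null, the only non-routine ingredient being the asymptotic covariance between $\Delta_3^{(n)*}$ and the linear term $\tau_3\Delta_3^{(n)}$. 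Using $\Delta_3^{(n)*}=\Delta_3^{(n)}-(\pmb{\Gamma}_{23})'(\pmb{\Gamma}_{22})^{-1}\pmb{\Delta}_2^{(n)}$ together with the asymptotic identities $\mathrm{Cov}(\pmb{\Delta}_2^{(n)},\Delta_3^{(n)})=\pmb{\Gamma}_{23}$ and $\mathrm{Var}(\Delta_3^{(n)})=\Gamma_{33}$, this covariance collapses precisely to $\tau_3\Gamma_{33}^*(\pmb{\mathcal{V}}_0)$. Le Cam's third lemma then yields, under the contiguous alternative, $\Delta_3^{(n)*}(\pmb{\mathcal{V}}_0)\stackrel{\mathcal{L}}{\longrightarrow}\mathcal{N}(\tau_3\Gamma_{33}^*,\Gamma_{33}^*)$. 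Since contiguity transports the $o_{\rm P}(1)$ in~(\ref{lineaeff}) and the consistency of the denominator from the null to the alternative, Slutsky's lemma again gives $Q^{(n)}_{\nu_0}\stackrel{\mathcal{L}}{\longrightarrow}\mathcal{N}(\tau_3\sqrt{\Gamma_{33}^*(\pmb{\mathcal{V}}_0)},1)$.

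Part~(iii) combines the two preceding parts with the transfer of optimality granted by ULAN. The asymptotic level follows from part~(i): since $Q^{(n)}_{\nu_0}$ is asymptotically standard normal under every element of $\mathcal{H}_0$, the rejection probability $\mathrm{P}[\,|Q^{(n)}_{\nu_0}|>z_{\alpha/2}\,]$ tends to $\alpha$. For the maximin claim I would argue that, by ULAN, the sequence of local experiments at $\pmb{\mathcal{V}}_0$ converges to the Gaussian shift experiment displayed in Section~\ref{threetwo}, in which the efficient central sequence plays the role of the efficient score for $\nu$ and satisfies $\Delta_3^*\sim\mathcal{N}(\tau_3\Gamma_{33}^*,\Gamma_{33}^*)$; the two-sided test rejecting for large $|\Delta_3^*|/\sqrt{\Gamma_{33}^*}$ is maximin in that Gaussian limit for testing $\tau_3=0$ against $\tau_3\neq0$, and by the asymptotic transfer of power functions described in the Introduction this exact optimality is inherited, locally and asymptotically, by $\phi^{(n)}_{\nu_0}$.

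The main obstacle is concentrated in part~(ii): verifying the joint normality under the null and carrying out the covariance computation showing that the cross-covariance equals $\tau_3\Gamma_{33}^*$ rather than $\tau_3\Gamma_{33}$, which is exactly what singles out the efficient — rather than the raw — central sequence as the correct object and makes the mean shift match $\Gamma_{33}^*$. A secondary, more conceptual point is the careful appeal to contiguity to move~(\ref{lineaeff}) and the convergence of the denominator from the null to the alternative, and the invocation of the general Le Cam theory to legitimize the maximin transfer in part~(iii).
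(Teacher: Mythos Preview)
Your proposal is correct and follows essentially the same route as the paper's proof: equation~(\ref{lineaeff}) plus continuity of $\Gamma_{33}^*$ and Slutsky for part~(i), joint asymptotic normality of $(\Delta_3^{(n)*},\Lambda^{(n)})$ under the null followed by Le Cam's third lemma and contiguity for part~(ii), and the transfer of maximin optimality from the limiting Gaussian shift experiment for part~(iii). The only cosmetic difference is that the paper states the joint limit for a general perturbation $\mathcal{T}^{(n)}$ rather than your specialization $(\pmb{0},\pmb{0},\tau_3^{(n)})$, but since the cross-covariance reduces to $\tau_3\Gamma_{33}^*$ either way this is immaterial, and your explicit verification of that covariance is in fact more detailed than what the paper writes out.
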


The corresponding one-sided tests are easily derived along the same lines. Theorem~\ref{maximin} shows that our test has the same asymptotic behavior as the LR, W and LM tests. As already explained in the Introduction, our tests improve on these classical proposals by the non-necessity of estimating $\nu$ (neither under the null nor under the alternative), the freedom of choice among root-$n$ consistent estimators $\hat{\pmb{\mu}}^{(n)}$ and $\hat{\pmb{\Sigma}}^{(n)}$ and the ensuing simplicity. Yet another advantage of our Le Cam approach lies in the fact that Part~(ii) of Theorem \ref{maximin} makes it possibly to easily write down the power of $\phi^{(n)}_{\nu_0}$.  Denoting by $\Phi$ the cumulative distribution function of the standard Gaussian distribution, the asymptotic power of  $\phi^{(n)}_{\nu_0}$ under local alternatives of the form $\bigcup_{\pmb{\mu} \in \mathbb{R}^k} \bigcup_{\pmb{\Sigma} \in S_{k}}{ \rm P}^{(n)}_{(\pmb{\mu}, \pmb{\Sigma}, \nu_0+n^{-1/2}\tau_3^{(n)})}$  ($\tau_3:=\lim_{n \to \infty} \tau_3^{(n)}$) is then given by
$$1-\Phi\left(z_{\alpha/2}-\tau_3\sqrt{\Gamma_{33}^*(\pmb{\mu}, \pmb{\Sigma}, \nu_0)}\right)+\Phi\left(-z_{\alpha/2}-\tau_3\sqrt{\Gamma_{33}^*(\pmb{\mu}, \pmb{\Sigma}, \nu_0)}\right),$$
and by 
$$1-\Phi\left(z_\alpha-\tau_3\sqrt{\Gamma_{33}^*(\pmb{\mu}, \pmb{\Sigma}, \nu_0)}\right) \qquad \textnormal{and}\qquad \Phi\left(-z_\alpha-\tau_3\sqrt{\Gamma_{33}^*(\pmb{\mu}, \pmb{\Sigma}, \nu_0)}\right)$$
in the respective one-sided tests  against $\mathcal{H}_1^{>}: \nu>\nu_0$ and $\mathcal{H}_1^{<}: \nu <\nu_0$.  

We conclude this section by briefly quantifying the loss in power due to the estimation of the scatter parameter. The ULAN result in Theorem \ref{ulan} is about the \og unspecified scatter \fg \,model, but it evidently  entails ULAN for specified scatter.   Thus, locally and asymptotically optimal tests for $\mathcal{H}_0: \nu=\nu_0$ under specified scatter $\pmb{\Sigma}$ reject $\mathcal{H}_0$ (at asymptotic level $\alpha$) whenever
$$|Q^{(n)}_{\nu_0,\pmb{\Sigma}}|>z_{\alpha/2},$$
with
$$Q^{(n)}_{\nu_0,\pmb{\Sigma}}:=\frac{\Delta^{(n)}_3\left(\pmb{\hat{\mu}}^{(n)}, \pmb{\Sigma}, \nu_0\right)}{\sqrt{\Gamma_{33}\left(\nu_0\right)}},$$
where $\pmb{\hat{\mu}}^{(n)}$ is a sequence of estimators satisfying Assumption A. Along the same lines as in the proof of Theorem \ref{maximin}, one can show that the asymptotic behavior of $Q_{\nu_0,\pmb{\Sigma}}^{(n)}$ under the local alternatives is $\mathcal{N}(\tau_3 \sqrt{\Gamma_{33}(\nu_0)},1).$ The non-centrality parameters in the asymptotic non-null distributions of $Q^{(n)}_{\nu_0}$ and $Q_{\nu_0,\pmb{\Sigma}}^{(n)}$ allow for computing the efficiency loss due to an unspecified $\pmb{\Sigma}$, which is simply the difference of those local shifts, hence
$$\tau_3\left(\sqrt{\Gamma_{33}(\nu_0)}-\sqrt{\Gamma_{33}^*(\pmb{\mu}, \pmb{\Sigma}, \nu_0)}\right).$$
The positive definiteness of $\pmb{\Gamma}_{22}(\pmb{\mathcal{V}})$ in $\Gamma_{33}^*(\pmb{\mu}, \pmb{\Sigma}, \nu_0)$ confirms the unsurprising fact that this loss is strictly positive. Quite remarkably, it does not depend on the scale $\sigma:=\sqrt{\pmb{\Sigma}}$ in the one-dimensional setup.

{\section{Monte Carlo simulation study}\label{simus}

In order to investigate the finite-sample properties of the optimal test $\phi^{(n)}_{\nu_0}$ proposed in this paper, we have conducted a Monte Carlo simulation study, whose code  has been written in $\mathtt{R}$ and is available from the authors upon request.

More concretely, we compare the power of our test $\phi^{(n)}_{\nu_0}$ to that of the likelihood ratio test LR
in dimension $k=6$, for the two sample sizes  $n=200$ and 500 and for the two null values $\nu_0=5$ and $\nu_0=10$. To this end, we have for each setting generated $N=2,500$ independent samples of $6$-variate Student $t$ random vectors with $\pmb\mu=\pmb0$, $\pmb\Sigma$ the $6\times 6$ identity matrix, and $\nu=\nu_0+(\delta-4)$ for $\delta=1,\ldots,7$. This choice permits us to test the power of our test against both higher and lower values of the parameter of interest. Since $\nu_0>4$, we estimate the nuisance parameters $\pmb\mu$ and $\pmb\Sigma$ by means of the sample mean and sample covariance matrix, respectively; this is why we denote our test  $\phi^{(n)}_{MeCov}$. The results at the $5\%$ level are reported in Table 1. We clearly see that the two tests always detect the deviation from the null hypothesis, and that the differences in performance logically shrink when the sample size increases. Quite interestingly, it appears that our test performs better against right-sided alternatives (which may be explained by the fact that the sample mean and covariance estimators become more efficient when we approach the multinormality situation), whereas the LR test is better for left-sided alternatives. 

One major advantage of our approach lies in the fact that any root-$n$ consistent estimators $\hat{\pmb\mu}^{(n)}$ and $\hat{\pmb\Sigma}^{(n)}$ can be used to estimate the unknown nuisance parameters. In order to detect the finite-sample effect  of distinct such estimators, we have conducted a further simulation study with $N=2,500$ replications, this time in dimension $k=2$, for $\pmb\mu=\pmb0$, $\pmb\Sigma$  with entries $\Sigma_{11}=5, \Sigma_{21}= \Sigma_{12}=3$ and $\Sigma_{22}=2$, and $\nu_0=5$ against $\nu=2,3,4,6,7,8$.  Besides $\phi^{(n)}_{MeCov}$ we have  used the tests $\phi^{(n)}_{MOT}$,  based on the M\"ott\"onen-Oja spatial median and the adjusted Tyler shape matrix, and $\phi^{(n)}_{MCD}$, based on the MCD estimators for location and scatter. The results at the $5\%$ level are reported in Table 2. As could be expected, the latter two tests perform less well against right-sided alternatives (MCD-based methods are known to exhibit very moderate power); they even exhibit very low performances for $n=200$. They outperform  $\phi^{(n)}_{MeCov}$ and the LR test against left-sided alternatives, even for $n=200$ (it is however to be remarked that for this sample size the MCD-based test lies slightly above the nominal level constraint); a reason for this improved performance compared to $\phi^{(n)}_{MeCov}$ are the good robustness properties of the MOT and MCD estimators. Again, these differences are waning with the sample size.

\section{Real-data example}\label{sec:real}

In this section, we  apply our optimal test on a real-data example, namely on a data set made of 9 years of daily returns of 22 major worldwide market indexes that represent three geographical areas: America (S\&P500, NASDAQ, TSX, Merval, Bovespa and IPC), Europe and Middle East (AEX, ATX, FTSE, DAX, CAC40, SMI, MIB and TA100), and East Asia and Oceania (HgSg, Nikkei, StrTim, SSEC, BSE, KLSE, KOSPI and AllOrd). The sample consists of 2536 observations, from January 4, 2000 to September 22, 2009. The same data has already been analyzed in Dominicy and Veredas (2013) and Dominicy  \mbox{\emph{et al.}} (2013).  We refer the reader to Table 5 of Dominicy and Veredas~(2013) for some information about the descriptive statistics of each return series, and to Dominicy  \mbox{\emph{et al.}} (2013) for a detailed description of the filtering used and the data set's criticisms. The purpose of our study here is to provide  confidence intervals for the tail index under the Student $t$ assumption using the  methods developed in the present paper.

Using our right- and left-sided tests $\phi^{(n)}_{MeCov}$ at the $95\%$ confidence level, we obtain the interval $[9.44,10.96]$, while a $99\%$ confidence interval corresponds to $[9.13,11.29]$. Both intervals contain and hence confirm the value of $9.51$ derived  in Dominicy \mbox{\emph{et al.}}~(2013). Our analysis is finer in the sense that we do not provide a single estimated value but rather a confidence interval. Moreover, our tests can easily be used for testing any given value of the null hypothesis, e.g. the value $\nu=8$ is dramatically rejected by our right-sided test with a p-value of $6.8\times 10^{-7}$.

\section{Final comments and outlook on efficient estimation methods}\label{sec:est}

We have proposed in this paper a new efficient way, based on the Le Cam methodology, to tackle hypothesis testing problems about the tail weight parameter in multivariate Student $t$ distributions. Our tests are as powerful as the classical procedures based on maximum likelihood estimation, but improve on the latter by their simpler form and by avoiding the fallacies inherent to ML estimation under the $t$ model (see the Introduction). Moreover our methodology allows us to calculate explicit asymptotic power expressions against sequences of contiguous alternatives. Therefore, we conceive our tests as attractive alternatives to the classical procedures, all-the-more since the practitioners can freely choose their favorite root-$n$ consistent estimators for the nuisance parameters.

Yet another advantage of the methodology developed in the present paper lies in its applicability in point estimation. More precisely, using the ULAN structure and the efficient central sequence for tail weight, our framework readily leads to the construction of so-called optimal \emph{one-step estimators} for~$\nu$ (of course also for $\pmb\mu$ and $\pmb\Sigma$, but we here keep our focus on $\nu$). The main idea behind one-step estimation consists in adding to an existing adequate preliminary estimator $\hat{\nu}^{(n)}$ a quantity depending on a version of the efficient central sequence for $\nu$. More precisely, the one-step estimator takes on the guise
\begin{align}\label{onestep} \hat{\nu}_{Cam}^{(n)}=\hat{\nu}^{(n)}+ n^{-1/2}\left(\Gamma_{33}^*\left(\pmb{\hat{\mu}}^{(n)}, \pmb{\hat{\Sigma}}^{(n)}, \hat{\nu}^{(n)}\right)\right)^{-1}\Delta_3^{(n)*}\left( \pmb{\hat{\mu}}^{(n)}, \pmb{\hat{\Sigma}}^{(n)}, \hat{\nu}^{(n)}\right),\end{align}
where $\left(\pmb{\hat{\mu}}^{(n)}, \pmb{\hat{\Sigma}}^{(n)}, \hat{\nu}^{(n)}\right)$ is a preliminary estimator of $\left(\pmb{\mu}, \pmb{\Sigma},\nu \right)$ fulfilling Assumption~A. The following result states the asymptotic properties of the one-step estimator. 

\begin{theorem}\label{prop}
Suppose that $\hat{\pmb{\mu}}^{(n)}$ and $\hat{\pmb{\Sigma}}^{(n)}$ satisfy Assumption~A. Let $\hat{\nu}^{(n)}$ be an estimator of $\nu$ fulfilling also Assumption A and let $\hat{\nu}^{(n)}_{Cam}$ be the one-step estimator given by (\ref{onestep}). Then, under ${\rm P}^{(n)}_{\pmb{\pmb{\mathcal{V}}}}$,  
$$n^{1/2}(\hat{\nu}^{(n)}_{Cam}-\nu) \stackrel{\mathcal{L}} \to \mathcal{N}\left( 0, (\Gamma_{33}^*(\pmb{\mathcal{V}}))^{-1}\right)$$
as $n \to \infty$. Moreover, $\hat{\nu}^{(n)}_{Cam}$ is the most efficient estimator for $\nu$ under $\bigcup_{\pmb{\mu} \in \mathbb{R}^k} \bigcup_{\pmb{\Sigma}\in S_{k}}\bigcup_{\nu>0}{\rm P}^{(n)}_{(\pmb{\mu}, \pmb{\Sigma}, \nu)}$.
\end{theorem}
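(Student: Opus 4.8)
The plan is to insert the definition (\ref{onestep}) of $\hat{\nu}_{Cam}^{(n)}$ into $n^{1/2}(\hat{\nu}_{Cam}^{(n)}-\nu)$ and to exploit the asymptotic linearity of the efficient central sequence (Lemma \ref{linea}) so as to produce the cancellation of the preliminary-estimator term that is the hallmark of Le Cam's one-step construction. First I would write
\begin{align*}
n^{1/2}(\hat{\nu}_{Cam}^{(n)}-\nu)
&= n^{1/2}(\hat{\nu}^{(n)}-\nu)\\
&\quad+\left(\Gamma_{33}^*\big(\hat{\pmb{\mu}}^{(n)},\hat{\pmb{\Sigma}}^{(n)},\hat{\nu}^{(n)}\big)\right)^{-1}\Delta_3^{(n)*}\big(\hat{\pmb{\mu}}^{(n)},\hat{\pmb{\Sigma}}^{(n)},\hat{\nu}^{(n)}\big),
\end{align*}
and then handle the two estimated quantities on the right-hand side separately.

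For the efficient information, continuity of $\pmb{\mathcal{V}}\mapsto\Gamma_{33}^*(\pmb{\mathcal{V}})$ together with the root-$n$ consistency granted by Assumption~A(i) gives $\Gamma_{33}^*(\hat{\pmb{\mu}}^{(n)},\hat{\pmb{\Sigma}}^{(n)},\hat{\nu}^{(n)})=\Gamma_{33}^*(\pmb{\mathcal{V}})+o_{{\rm P}}(1)$. For the efficient central sequence, I would apply Lemma \ref{linea}, which shows that a joint local perturbation of $(\pmb{\mu},\pmb{\Sigma},\nu)$ moves $\Delta_3^{(n)*}$ only through its $\nu$-component; taking the perturbation to be $\pmb{\tau}^{(n)}=\big(n^{1/2}(\hat{\pmb{\mu}}^{(n)}-\pmb{\mu}),\,n^{1/2}(\hat{\pmb{\Sigma}}^{(n)}-\pmb{\Sigma}),\,n^{1/2}(\hat{\nu}^{(n)}-\nu)\big)$ yields
$$\Delta_3^{(n)*}\big(\hat{\pmb{\mu}}^{(n)},\hat{\pmb{\Sigma}}^{(n)},\hat{\nu}^{(n)}\big)=\Delta_3^{(n)*}(\pmb{\mathcal{V}})-\Gamma_{33}^*(\pmb{\mathcal{V}})\,n^{1/2}(\hat{\nu}^{(n)}-\nu)+o_{{\rm P}}(1).$$
Substituting both relations back, the two copies of $n^{1/2}(\hat{\nu}^{(n)}-\nu)$ cancel and one is left with $n^{1/2}(\hat{\nu}_{Cam}^{(n)}-\nu)=\big(\Gamma_{33}^*(\pmb{\mathcal{V}})\big)^{-1}\Delta_3^{(n)*}(\pmb{\mathcal{V}})+o_{{\rm P}}(1)$. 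The asymptotic normality $\Delta_3^{(n)*}(\pmb{\mathcal{V}})\stackrel{\mathcal{L}}{\to}\mathcal{N}(0,\Gamma_{33}^*(\pmb{\mathcal{V}}))$ recorded in Section~\ref{threetwo}, combined with Slutsky's lemma, then delivers the announced $\mathcal{N}\big(0,(\Gamma_{33}^*(\pmb{\mathcal{V}}))^{-1}\big)$ limit.

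The main obstacle is that Lemma \ref{linea} is stated for a \emph{deterministic} bounded sequence $\pmb{\tau}^{(n)}$, whereas the perturbation used above is \emph{random} and merely $O_{{\rm P}}(1)$. Passing from the deterministic to the random version is exactly where Assumption~A(ii) (local asymptotic discreteness) is indispensable: discretizing $(\hat{\pmb{\mu}}^{(n)},\hat{\pmb{\Sigma}}^{(n)},\hat{\nu}^{(n)})$ and invoking Lemma~4.4 of Kreiss (1987) --- precisely as was done to obtain (\ref{lineaeff}) --- one turns the pointwise-in-$\pmb{\tau}$ expansion into one valid uniformly over the $O(n^{-1/2})$-neighbourhood in which the discretized estimator takes its finitely many values, thereby legitimizing the substitution while retaining the $o_{{\rm P}}(1)$ control.

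It remains to justify the optimality claim. The limiting variance $(\Gamma_{33}^*(\pmb{\mathcal{V}}))^{-1}$ is the inverse of the efficient Fisher information for $\nu$, i.e.\ the Schur complement accounting for the unknown nuisance pair $(\pmb{\mu},\pmb{\Sigma})$; by the convolution theorem of H\'ajek and Le Cam, applied within the ULAN family of Theorem \ref{ulan}, this is the smallest attainable asymptotic variance for any regular estimator sequence of $\nu$. Since $\hat{\nu}_{Cam}^{(n)}$ reaches this bound at every $\pmb{\mathcal{V}}$, it is asymptotically efficient, and the argument being valid at each fixed $(\pmb{\mu},\pmb{\Sigma},\nu)$, efficiency holds over the whole union $\bigcup_{\pmb{\mu}}\bigcup_{\pmb{\Sigma}}\bigcup_{\nu>0}{\rm P}^{(n)}_{(\pmb{\mu},\pmb{\Sigma},\nu)}$.
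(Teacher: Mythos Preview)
Your proposal is correct and follows essentially the same route as the paper's proof: expand $n^{1/2}(\hat{\nu}_{Cam}^{(n)}-\nu)$ via (\ref{onestep}), invoke Lemma~\ref{linea} together with Lemma~4.4 of Kreiss (1987) and the continuity of $\Gamma_{33}^*$ to reduce to $(\Gamma_{33}^*(\pmb{\mathcal{V}}))^{-1}\Delta_3^{(n)*}(\pmb{\mathcal{V}})+o_{{\rm P}}(1)$, and then read off the limit law and efficiency from the ULAN structure. Your write-up is in fact slightly more explicit than the paper's (you spell out the deterministic-to-random passage and name the convolution theorem), but the argument is the same.
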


The short proof is provided in the appendix. Theorem \ref{prop} shows that, whatever the performance of the preliminary root-$n$ consistent estimator $\hat{\nu}^{(n)}$ and whatever the choice of root-$n$ consistent estimators for location and scatter, the one-step estimator $\hat{\nu}^{(n)}_{Cam}$ is as efficient as the maximum likelihood estimator (MLE).  Hence we can make any root-$n$ consistent estimator for tail weight as efficient as the MLE in a quite simple way. Obvious examples are the Method of Moments estimator (whose multivariate version is given in Mardia,~1970) or, if moments are to be avoided (moment-based estimators require $\nu>8$ to be root-$n$ consistent), estimators based on quantiles of the $t$-based Mahalanobis distance. Thus Theorem~\ref{prop} opens the door to numerous possibilities for efficiently estimating $\nu$ without having recourse to maximum likelihood estimation. A detailed study of these estimators goes beyond the scope of the present paper and is therefore left for future research.

\begin{table} 
\begin{center}
\begin{tabular}{|l|ccccccc|}
\hline
$k=6$&\multicolumn{7}{c}{$n=200$}\vline\\\hline
Test&$\nu=2$&$\nu=3$&$\nu=4$& $\nu_0=5$&$\nu=6$&$\nu=7$&$\nu=8$\\\hline
LR& 1.0000 &0.9432& 0.3092& 0.0536   &0.2208 &0.5268 &0.7940\\
$\phi^{(n)}_{MeCov}$ & 1.0000 &0.9000 & 0.2116 &0.0568  & 0.2504 &0.5612& 0.8172  \\
\hline
Test&$\nu=7$&$\nu=8$&$\nu=9$& $\nu_0=10$&$\nu=11$&$\nu=12$&$\nu=13$\\\hline
LR& 0.3632 &0.1616& 0.0712 &0.0592& 0.0976& 0.1588 &0.2260\\
$\phi^{(n)}_{MeCov}$ & 0.2616 &0.1000 &0.0468 &0.0540& 0.1052 &0.1840 &0.2532  \\
\hline
&\multicolumn{7}{c}{$n=500$}\vline\\\hline
Test&$\nu=2$&$\nu=3$&$\nu=4$& $\nu_0=5$&$\nu=6$&$\nu=7$&$\nu=8$\\\hline
LR& 1.0000 &    1.0000& 0.6456 &0.0484 &0.4428& 0.8828 &0.9944\\
$\phi^{(n)}_{MeCov}$ & 1.0000 &1.0000& 0.5844 &0.0504 &0.4732& 0.8908& 0.9948  \\
\hline
Test&$\nu=7$&$\nu=8$&$\nu=9$& $\nu_0=10$&$\nu=11$&$\nu=12$&$\nu=13$\\\hline
LR& 0.7472& 0.3508 &0.1036& 0.0548 &0.1116 &0.2336& 0.4368\\
$\phi^{(n)}_{MeCov}$ & 0.6956 &0.2932 &0.0840 &0.0572 &0.1280& 0.2584 &0.4600  \\
\hline
\end{tabular}
\caption{Rejection frequencies (out of $N=2,500$ replications), under $6$-variate Student $t$ densities with location $\pmb{\mu}=\pmb0$ and  scatter matrix ${\pmb\Sigma}={\pmb I}_6$, the $6\times 6$ identity matrix, for the testing problems $\nu_0=5$ against $\nu=2,3,4,6,7,8$ and $\nu_0=10$ against $\nu=7,8,9,11,12,13$, of the likelihood ratio test LR and the optimal test $\phi_{MeCov}^{(n)}$ for the sample sizes $n=200$ and 500. The nominal  level is $\alpha=0.05$.}\label{tab2}
\end{center}
\end{table}

\begin{table} 
\begin{center}
\begin{tabular}{|l|ccccccc|}
\hline
$k=2$&\multicolumn{7}{c}{$n=200$}\vline\\\hline
Test&$\nu=2$&$\nu=3$&$\nu=4$& $\nu_0=5$&$\nu=6$&$\nu=7$&$\nu=8$\\\hline
LR& 0.9972& 0.6708 &0.1692 &0.0508 &0.1200& 0.2328 &0.3544\\
$\phi^{(n)}_{MeCov}$& 0.9964 &0.6576& 0.1724& 0.0464 &0.1016 &0.1880& 0.3044\\
$\phi^{(n)}_{MOT}$ & 0.9988 &0.7884 &0.2772 &0.0568 &0.0308 &0.0664& 0.1200 \\
$\phi^{(n)}_{MCD}$  & 0.9992 &0.8092& 0.3116& 0.0716 &0.0356 &0.0512 &0.1084 \\
\hline
&\multicolumn{7}{c}{$n=500$}\vline\\\hline
Test&$\nu=2$&$\nu=3$&$\nu=4$& $\nu_0=5$&$\nu=6$&$\nu=7$&$\nu=8$\\\hline
LR& 1.0000 &0.9720& 0.3220 &0.0556 &0.2012 &0.4736 &0.7080\\
$\phi^{(n)}_{MeCov}$& 1.0000& 0.9716 &0.3344 &0.0532 &0.1796& 0.4468& 0.6728\\
$\phi^{(n)}_{MOT}$ & 1.0000& 0.9844 &0.4168 &0.0500& 0.0984 &0.3188 &0.5420\\
$\phi^{(n)}_{MCD}$  & 1.0000& 0.9864& 0.4280& 0.0524 &0.0876& 0.2764 &0.4944 \\
\hline
\end{tabular}
\caption{Rejection frequencies (out of $N=2,500$ replications), under $2$-variate Student $t$ densities with location $\pmb{\mu}=\pmb0$ and scatter $\pmb\Sigma$ with $\Sigma_{11}=5,\Sigma_{21}=\Sigma_{12}=3$ and $\Sigma_{22}=2$, for the null hypothesis $\nu_0=5$ against $\nu=2,3,4,6,7,8$, of the likelihood ratio test LR and the optimal tests $\phi^{(n)}_{MeCov}$, $\phi^{(n)}_{MOT}$ and $\phi^{(n)}_{MCD}$.   The nominal  level is $\alpha=0.05$.}\label{tab2}
\end{center}
\end{table}

{
\appendix

\section{Technical proofs.}

\noindent {\bf Proof of Lemma 1}. It follows from the asymptotic linearity property of the central sequence given in (\ref{uniform}) that, under  ${\rm P}_{\pmb{\mathcal{V}}}^{(n)}$ and for $n\rightarrow\infty$,
\begin{align}\label{delta3}\Delta_3^{(n)}\left(\pmb{\mathcal{V}}+n^{-1/2}\pmb{\tau}^{(n)}\right)=\Delta_3^{(n)}(\pmb{\mathcal{V}})-(\pmb{\Gamma}_{23}(\pmb{\mathcal{V}}))'\pmb{\tau}_2^{(n)}-\Gamma_{33}(\nu)\tau_3^{(n)}+o_{{\rm P}}(1),\end{align}
and 
\begin{align}\label{delta2}\pmb{\Delta}_2^{(n)}\left(\pmb{\mathcal{V}}+n^{-1/2}\pmb{\tau}^{(n)}\right)=\pmb{\Delta}_2^{(n)}(\pmb{\mathcal{V}})-\pmb{\Gamma}_{22}(\pmb{\mathcal{V}})\pmb{\tau}^{(n)}_2-\pmb{\Gamma}_{23}(\pmb{\mathcal{V}})\tau_3^{(n)}+o_{{\rm P}}(1).\end{align}
By the definition of the central sequence for tail weight, we have
\begin{align} \label{deltastar}
\Delta_3^{(n)*}\left(\pmb{\mathcal{V}}+n^{-1/2}\pmb{\tau}^{(n)}\right)&=\Delta_3^{(n)}\left(\pmb{\mathcal{V}}+n^{-1/2}\pmb{\tau}^{(n)}\right)-\left(\pmb{\Gamma}_{23}\left(\pmb{\mathcal{V}}+n^{-1/2}\pmb{\tau}^{(n)}\right)\right)' \left(\pmb{\Gamma}_{22}\left(\pmb{\mathcal{V}}+n^{-1/2}\pmb{\tau}^{(n)}\right)\right)^{-1}\nonumber\\
& \qquad \times \pmb{\Delta}_2^{(n)}\left(\pmb{\mathcal{V}}+n^{-1/2}\pmb{\tau}^{(n)}\right).
\end{align}
Substituting (\ref{delta3}) and (\ref{delta2}) in (\ref{deltastar}) yields
\begin{align*}
&\Delta_3^{(n)*}\left(\pmb{\mathcal{V}}+n^{-1/2}\pmb{\tau}^{(n)}\right)\\
=\quad&\Delta_3^{(n)*}(\pmb{\mathcal{V}})-\left[ \left(\pmb{\Gamma}_{23}\left(\pmb{\mathcal{V}}+n^{-1/2}\pmb{\tau}^{(n)}\right)\right)'\left(\pmb{\Gamma}_{22}\left(\pmb{\mathcal{V}}+n^{-1/2}\pmb{\tau}^{(n)}\right)\right)^{-1}-(\pmb{\Gamma}_{23}(\pmb{\mathcal{V}}))'(\pmb{\Gamma}_{22}(\pmb{\mathcal{V}}))^{-1}\right]\pmb{\Delta}_2^{(n)}(\pmb{\mathcal{V}})\\&-\left(\pmb{\Gamma}_{23}(\pmb{\mathcal{V}})\right)'\pmb{\tau}_2^{(n)}-\Gamma_{33}(\nu)\tau_3^{(n)}+\left(\pmb{\Gamma}_{23}\left(\pmb{\mathcal{V}}+n^{-1/2}\pmb{\tau}^{(n)}\right)\right)'\left(\pmb{\Gamma}_{22}\left(\pmb{\mathcal{V}}+n^{-1/2}\pmb{\tau}^{(n)}\right)\right)^{-1}\pmb{\Gamma}_{22}(\pmb{\mathcal{V}})\pmb{\tau}_2^{(n)}\\&
+\left(\pmb{\Gamma}_{23}\left(\pmb{\mathcal{V}}+n^{-1/2}\pmb{\tau}^{(n)}\right)\right)'\left(\pmb{\Gamma}_{22}\left(\pmb{\mathcal{V}}+n^{-1/2}\pmb{\tau}^{(n)}\right)\right)^{-1}\pmb{\Gamma}_{23}(\pmb{\mathcal{V}})\tau_3^{(n)}+o_{{\rm P}}(1),
\end{align*}
under  ${\rm P}_{\pmb{\mathcal{V}}}^{(n)}$ as $n\rightarrow\infty$. Hence, the result follows from the continuity of $\pmb{\mathcal{V}} \mapsto \pmb{\Gamma}(\pmb{\mathcal{V}})$ and  the boundedness in probability of the central sequence. \hfill $\square$
 \vspace{0.2cm}

\noindent {\bf Proof of Theorem 2}. Since $\Gamma_{33}^*(\pmb{\mathcal{V}})$ is continuous in both $\pmb{\mu}$ and $\pmb{\Sigma}$, we readily have for any bounded sequence $\left(\pmb{\tau}_1^{(n)}, \pmb{\tau}_2^{(n)}\right) \in \mathbb{R}^k \times M_k$ that $\lim_{n\rightarrow\infty}\Gamma_{33}^*\left(\pmb{\mu}+n^{-1/2}\pmb{\tau}^{(n)}_1, \pmb{\Sigma}+n^{-1/2}\pmb{\tau}^{(n)}_2, \nu_0\right)= \Gamma_{33}^*(\pmb{\mu}, \pmb{\Sigma}, \nu_0)$. Since this convergence of course implies convergence in probability, Lemma~4.4  in Kreiss (1987) allows us to replace the non-random quantities with root-$n$ consistent and locally and asymptotically discrete estimators. Hence, Slutsky's Lemma, combined with Lemma \ref{linea}, entails that under ${\rm P}^{(n)}_{(\pmb{\mu}, \pmb{\Sigma}, \nu_0)}$,
\begin{equation}\label{contig}
Q^{(n)}_{\nu_0}=\frac{\Delta_3^{(n)*}\left(\pmb{\hat{\mu}}^{(n)}, \pmb{\hat{\Sigma}}^{(n)}, \nu_0\right)}{\sqrt{\Gamma_{33}^*\left(\pmb{\hat{\mu}}^{(n)}, \pmb{\hat{\Sigma}}^{(n)}, \nu_0\right)}}=\frac{\Delta_3^{(n)*}(\pmb{\mu}, \pmb{\Sigma}, \nu_0)}{\sqrt{\Gamma_{33}^*(\pmb{\mu}, \pmb{\Sigma}, \nu_0)}}+o_{\rm P}(1)
\end{equation}
as $n \to \infty$. The proof of the statement in Part (i) then follows, since $\Delta_3^{(n)*}\left(\pmb{\mu}, \pmb{\Sigma}, \nu_0\right)$ is asymptotically $\mathcal{N}(0, \Gamma_{33}^*(\pmb{\mu}, \pmb{\Sigma}, \nu_0))$ under $\bigcup_{\pmb{\mu} \in \mathbb{R}^k} \bigcup_{\pmb{\Sigma}\in S_{k}}{\rm P}^{(n)}_{(\pmb{\mu}, \pmb{\Sigma}, \nu_0)}$ by the Central Limit Theorem. Moreover, still under ${\rm P}^{(n)}_{(\pmb{\mu}, \pmb{\Sigma}, \nu_0)}$ and for any bounded sequence $\pmb{\tau}^{(n)}=\left(\pmb{\tau}_1^{(n)}, \pmb{\tau}_2^{(n)}, \tau_3^{(n)}\right) \in \mathbb{R}^k \times M_k \times \mathbb{R}$, we see that, as $n\rightarrow\infty$, 
$$\begin{pmatrix} 
\Delta_3^{(n)*}(\pmb{\mu}, \pmb{\Sigma}, \nu_0) \\
\Lambda^{(n)}_{(\pmb{\mu}, \pmb{\Sigma}, \nu_0)+n^{-1/2}\pmb{\tau}^{(n)}/(\pmb{\mu}, \pmb{\Sigma}, \nu_0)} 
\end{pmatrix} \stackrel{\mathcal{L}}{\longrightarrow}\mathcal{N}_2 \left( \begin{pmatrix} 0 \\ -\frac{1}{2} \mathcal{T}'\pmb{\Gamma}(\pmb{\mu}, \pmb{\Sigma}, \nu_0)\mathcal{T}\end{pmatrix}, \begin{pmatrix} \Gamma_{33}^*(\pmb{\mu}, \pmb{\Sigma}, \nu_0) & \tau_3\Gamma_{33}^*(\pmb{\mu}, \pmb{\Sigma}, \nu_0)\\
\tau_3 \Gamma_{33}^*(\pmb{\mu}, \pmb{\Sigma}, \nu_0) & \mathcal{T}'\pmb{\Gamma}(\pmb{\mu}, \pmb{\Sigma}, \nu_0)\mathcal{T}\end{pmatrix}\right),$$
where $\Lambda^{(n)}_{(\pmb{\mu}, \pmb{\Sigma}, \nu_0)+n^{-1/2}\pmb{\tau}^{(n)}/(\pmb{\mu}, \pmb{\Sigma}, \nu_0)} $ is the log-likelihood ratio and $\mathcal{T}^{(n)}:=((\pmb{\tau}_1^{(n)})',({\rm vech}(\pmb\tau_2^{(n)}))',\tau_3^{(n)})'$, with $\mathcal{T}=\lim_{n \to \infty} \mathcal{T}^{(n)}$. Le Cam's third lemma thus implies that $\Delta_3^{(n)*}(\pmb{\mu}, \pmb{\Sigma}, \nu_0)$ is asymptotically
$\mathcal{N}\left(\tau_3\Gamma_{33}^*(\pmb{\mu}, \pmb{\Sigma}, \nu_0), \Gamma_{33}^*(\pmb{\mu}, \pmb{\Sigma}, \nu_0)\right)$ under $\bigcup_{\pmb{\mu} \in \mathbb{R}^k} \bigcup_{\pmb{\Sigma}\in S_{k}}{\rm P}^{(n)}_{(\pmb{\mu}, \pmb{\Sigma}, \nu_0+n^{-1/2}\tau_3^{(n)})}$. Since (\ref{contig}) holds as well under ${\rm P}^{(n)}_{(\pmb{\mu}, \pmb{\Sigma}, \nu_0+n^{-1/2}\tau_3^{(n)})}$ by contiguity, Part (ii) of the theorem readily follows.

As regards Part (iii), the fact that $\phi^{(n)}_{\nu_0}$ has  asymptotic level $\alpha$ follows directly from the asymptotic null distribution given in Part (i), while local asymptotic maximinity is a consequence of the weak convergence of the local experiments to the Gaussian shift experiment. \hfill $\square$
 \vspace{0.2cm}

\noindent {\bf Proof of Theorem 3}. Let us start by showing that   the asymptotic distribution under ${\rm P}^{(n)}_{\pmb{\mathcal{V}}}$ of $n^{1/2}(\hat{\nu}^{(n)}_{Cam}-\nu)$
is the same as that of $(\Gamma_{33}^*(\pmb{\mathcal{V}}))^{-1}\Delta_3^{(n)*}(\pmb{\mathcal{V}})$.  From the asymptotic linearity of $\Delta_3^{(n)*}(\pmb{\mathcal{V}})$ in Lemma~\ref{linea} combined with Lemma~4.4 of Kreiss~(1987) under Assumption~A and from the continuity of $\pmb{\mathcal{V}} \mapsto \Gamma_{33}^*(\pmb{\mathcal{V}})$, we obtain that
\begin{align*}
n^{1/2}(\hat{\nu}^{(n)}_{Cam}-\nu)& = n^{1/2}(\hat{\nu}^{(n)}-\nu)+\left(\Gamma_{33}^*\left(\pmb{\hat{\mu}}^{(n)}, \pmb{\hat{\Sigma}}^{(n)}, \hat{\nu}^{(n)}\right)\right)^{-1}\Delta_3^{(n)*}\left( \pmb{\hat{\mu}}^{(n)}, \pmb{\hat{\Sigma}}^{(n)}, \hat{\nu}^{(n)}\right)\\
&=  n^{1/2}(\hat{\nu}^{(n)}-\nu)+(\Gamma_{33}^*(\pmb{\hat{\mu}}^{(n)}, \pmb{\hat{\Sigma}}^{(n)}, \hat{\nu}^{(n)}))^{-1}\left(\Delta_3^{(n)*}(\pmb{\mathcal{V}})-n^{1/2}(\hat{\nu}^{(n)}-\nu)\Gamma_{33}^*(\pmb{\mathcal{V}})\right)+o_{\rm P}(1)\\
&= (\Gamma_{33}^*(\pmb{\mathcal{V}}))^{-1}\Delta_3^{(n)*}(\pmb{\mathcal{V}})+o_{\rm P}(1)
\end{align*}
under ${\rm P}^{(n)}_{\pmb{\mathcal{V}}}$ as $n \to \infty$. The asymptotic behavior then directly follows thanks to the ULAN property. Efficiency of the estimator can be seen by noticing that the asymptotic variance coincides with the inverse of the efficient Fisher information. \hfill $\square$
 \vspace{0.2cm}}

\section*{Acknowledgments}
\noindent The research of Christophe Ley, who is also a member of ECARES, is supported by a Mandat de Charg\'e de Recherche from the Fonds National de la Recherche Scientifique, Communaut\'e fran\c caise de Belgique, whereas that of Anouk Neven is supported by the Fonds National de la Recherche, Luxembourg (Project Reference 4086487). Both authors would like to thank Yves Dominicy and David Veredas for providing them with the financial data used in Section \ref{sec:real} and Giovanni Peccati and Enrique Sentana for interesting comments and discussions.

\section*{References}

\end{document}